%% file: main.tex
\documentclass{article} 
\usepackage{arxiv,times}

\input{math_commands.tex}

\input{preamble.tex}
\usepackage{hyperref}
\usepackage[utf8]{inputenc} 
\usepackage[T1]{fontenc}    
\usepackage{hyperref}       
\usepackage{url}            
\usepackage{booktabs}       
\usepackage{amsfonts}       
\usepackage{nicefrac}       
\usepackage{microtype}      
\usepackage{xcolor}         
\usepackage{amsmath,amsthm,thmtools}
\usepackage{amsfonts}
\usepackage{amssymb}

\newcommand{\uni}[2]{\mathrm{Uni}({#1{:}#2})}
\newcommand{\rdn}[2]{\mathrm{Red}({#1{:}#2})}
\newcommand{\syn}[2]{\mathrm{Syn}({#1{:}#2})}
\newcommand{\mut}[2]{\mathrm{I}({#1;#2})}
\newcommand{\mutd}[3]{\mathrm{I}_{#1}({#2;#3})}
\usepackage{wrapfig}
\newcommand{\iid}[0]{i.i.d.}
\newcommand{\given}[0]{|}
\usepackage{etoc}
\newtheorem{lem}{Lemma}
\newtheorem{thm}{Theorem}
\newtheorem{defn}{Definition}
\newtheorem{prop}{Proposition}

\usepackage{enumitem}
\usepackage{subcaption}
\usepackage{booktabs}
\usepackage{multirow}

\usepackage{natbib}
\usepackage[most]{tcolorbox}
\usepackage[table]{xcolor}
\usepackage{array}
\usepackage{graphicx}
\usepackage{svg}
\newtcolorbox{myquote}{
  colback=blue!3,        
  colframe=blue!60!black,
  leftrule=3pt,          
  arc=0pt,               
  boxrule=0pt,           
  enhanced,
  breakable,
  left=6pt,
  right=6pt,
  top=6pt,
  bottom=6pt
}

\usepackage[most]{tcolorbox}
\usepackage{xcolor}

\definecolor{findingblue}{RGB}{70, 130, 180}
\definecolor{findingbackground}{RGB}{246, 250, 254}

\newtcolorbox{finding}{
    enhanced,
    breakable,
    colback=findingbackground,
    colframe=findingblue,
    boxrule=1pt,
    arc=5mm,
    outer arc=5mm,
    left=5mm,
    right=5mm,
    top=3.5mm,
    bottom=3.5mm,
    boxsep=0mm,
    before skip=10pt,
    after skip=10pt
}
\title{Interpreting Reasoning of Large Language Models via Partial Information Decomposition}

\author{Barproda Halder \thanks{Corresponding author.} \\
University of Maryland, College Park\\
\texttt{bhalder@umd.edu} \\
\And
Qiuyi Zhang \\
Elorian AI\\
\texttt{richard@elorian.ai} \\
\AND
Sanghamitra Dutta \\
University of Maryland, College Park\\
\texttt{sanghamd@umd.edu}
}

\iclrfinalcopy 
\begin{document}

\maketitle

\begin{abstract}


Large reasoning models (LRMs) have achieved substantial improvements in solving complex mathematical problems, but often produce lengthy, repetitive, or erroneous reasoning trajectories. In this work, we introduce a new interpretability framework \strategy to evaluate the quality of the reasoning process. \strategy leverages an emerging body of work from information theory called Partial Information Decomposition to disentangle the information about the final answer between two consecutive reasoning steps into non-negative components: unique information (in preceding steps or current step), redundant information, and synergistic information. Building on this decomposition, we propose the \emph{Step-wise Repetitive Reasoning Index (Step-RRI)}, a theoretically grounded measure that assesses whether the answer-relevant information in the current step $S_i$ is predominantly redundant with the past steps $S_{<i}$, relative to its unique and synergistic contributions. To evaluate the effectiveness of Step-RRI in detecting repetitiveness, we apply \strategy to the redundancy class of the PRMBench dataset where Step-RRI improves step-level redundancy identification accuracy by over $10$ points compared to embedding-similarity and information-gain baselines. Next, we define \emph{Trajectory-RRI}, an aggregate measure of repetitiveness for an individual reasoning trajectory. To demonstrate its practical relevance, we show that average Trajectory-RRI strongly correlates with actual reasoning length across QwQ-32B, DeepSeek-R1-Distill-Qwen-32B, and GPT-4.1, motivating its use as a signal for improving reasoning efficiency. Finally, we introduce \emph{Trajectory-RRI-guided data selection for fine-tuning}, demonstrating that selecting training data based on Trajectory-RRI can improve a fine-tuned model's reasoning efficiency while largely preserving its task performance. Our results establish RRI measures as both a step-level diagnostic of repetitive reasoning and a practical signal for improving reasoning efficiency. 

  
\end{abstract}

\section{Introduction}

The emergence of large reasoning models (LRMs), such as OpenAI's o1~\citep{OpenAI}, DeepSeek-R1~\citep{guo2025deepseek}, and QwQ-32B~\citep{team2025qwq}, has marked a significant leap in complex problem solving. These models typically engage in extensive reasoning or thinking before producing a final answer, often generating long chain-of-thought (CoT) reasoning trajectories~\citep{wei2022chain}. However, such reasoning can be excessively verbose and repetitive, frequently revisiting information that has already been established in earlier steps. This redundant reasoning increases computational cost without necessarily improving the correctness of the final answer. Moreover, as reasoning trajectories become longer and more convoluted, they become increasingly difficult for humans to follow and evaluate, making it challenging to determine whether individual steps are correct, informative, or meaningfully contribute to the final answer. \emph{Thus, there is a pressing need for automated methods to evaluate the quality of individual reasoning steps without human supervision.}

To address this challenge, we propose \strategy (\textbf{SL}iding Window \textbf{I}nformation \textbf{D}ecomposition \textbf{E}xplainer for \textbf{R}easoning), a unified interpretability framework that leverages Partial Information Decomposition (PID) to analyze the quality of individual reasoning steps. Building on PID, we introduce a novel metric \emph{Step-wise Repetitive Reasoning Index (Step-RRI)}, which uses the relative contributions of redundant, unique, and synergistic answer-relevant information to characterize whether an individual reasoning step is repetitive with respect to the preceding reasoning. Fig.~\ref{fig:toy-rri} illustrates an example of reasoning steps exhibiting different degrees of repetitiveness, and hence varying Step-RRI values.\begin{wrapfigure}[18]{r}{0.5\columnwidth}  
  \centering
  \includegraphics[width=\linewidth]{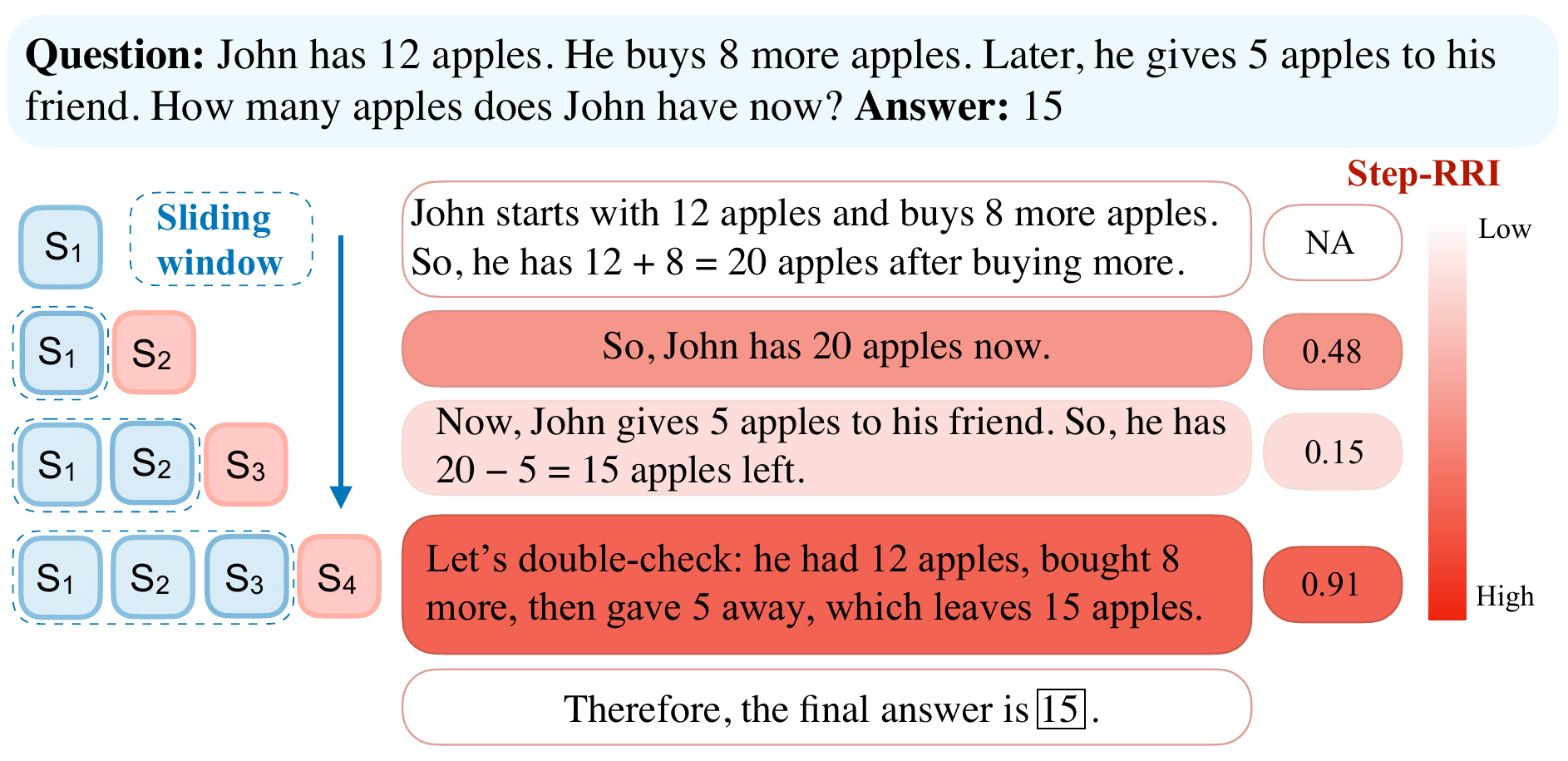}
  \caption{\emph{Example illustrating the role of \strategy in identifying repetitiveness.} The sliding window compares each reasoning step with the preceding steps. A step that adds answer-relevant information, such as computing the remaining apples, has a low expected Step-RRI; a step that repeats the same calculation has a high expected Step-RRI.}
  \label{fig:toy-rri}
\end{wrapfigure} To quantify reasoning repetitiveness at the trajectory level, we further define \emph{Trajectory-RRI} by aggregating repetitive behavior across the individual steps of a reasoning trajectory. Together, these measures characterize reasoning repetitiveness at the step and trajectory levels without requiring reference reasoning traces.

Recent works~\citep{ton2024understanding,yong2025think} have analyzed reasoning using classical information-theoretic quantities such as mutual information and conditional mutual information. However, classical information-theoretic measures only quantify whether the current step is informative or not, without distinguishing whether the steps contribute jointly or individually to the final answer. Such aggregate information gains fail to capture whether a step contributes uniquely, has redundancy, or has complementary information with the preceding reasoning. In contrast, \strategy characterizes repetitiveness directly by decomposing the information between the current step and its past reasoning steps about the final answer, enabling reference-free, step-level analysis of repetitive reasoning. Our main contributions can be summarized as follows:

\noindent \textbf{Disentangling Step-wise Reasoning through PID:} We propose a principled interpretability framework called \strategy to analyze when and how individual reasoning steps contribute \begin{wrapfigure}[14]{r}{0.4\columnwidth}  
  \vspace{-5pt}
  \centering
  \includegraphics[width=0.9\linewidth]{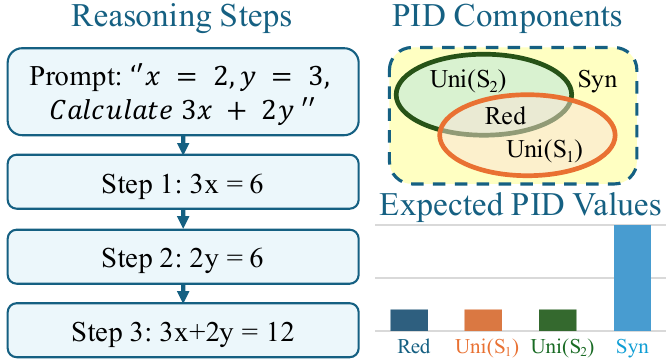}
  \caption{\emph{\strategy in a simple arithmetic problem:} Demonstrates high synergistic information between Step 1 (3x) and Step 2 (2y) because they jointly provide complete information about (3x+2y).}
  \label{fig:summary}
\end{wrapfigure}  to the final answer. \strategy is rooted in Partial Information Decomposition (PID), which decomposes the information about the final answer between two reasoning steps into unique, redundant, and synergistic components (see Fig.~\ref{fig:summary} for an illustrative example).
 Given the reasoning trajectories, we move across the steps in a sliding-window fashion and project them onto a meaningful embedding space. \strategy discretizes the embeddings of the reasoning steps and the final answer and estimates the PID measures. To examine the effect of noisy reasoning steps on these PID components, we consider two consecutive reasoning steps, $S_{i-1}$ and $S_i$, and inject independent synthetic noise into an individual step. We show, both theoretically (Theorem~\ref{thm:uniqueness_noise}) and empirically on synthetic arithmetic data, that the unique information contributed by the perturbed step consistently decreases.

\noindent \textbf{Quantifying Repetitive Reasoning through RRI:} We introduce the \emph{ Step-wise Repetitive Reasoning Index (Step-RRI)}, an information-theoretic measure for characterizing repetitive reasoning at the step level. \emph{Step-RRI} characterizes whether the answer-relevant information contributed by the current step is predominantly redundant with the preceding reasoning relative to its unique and synergistic contributions (Proposition~\ref{prop:red}). Furthermore, to quantify the overall repetitiveness of a reasoning trajectory, we define \emph{Trajectory-RRI} (Definition~\ref{def:point-wiseRRI}) by aggregating repetitive behavior across individual steps. We theoretically demonstrate (see Theorem~\ref{thm:repititive2}) how Step-RRI effectively captures repetitiveness. Our proof leverages monotonicity properties of the PID measure. We empirically validate the effectiveness of Step-RRI in detecting actual repetitiveness using the annotated redundancy class of PRMBench~\citep{song2025prmbench}. Step-RRI achieves a 10-point improvement in accuracy for step-level redundancy detection, outperforming embedding-similarity and information-gain baselines.

\noindent \textbf{Linking Trajectory-RRI to Reasoning Efficiency:} To assess the practical utility of Trajectory-RRI for capturing reasoning efficiency, we examine its relationship with actual reasoning length 
across multiple LLMs (see Table~\ref{tab:rri_correlation}). We evaluate reasoning trajectories on GSM8K~\citep{cobbe2021gsm8k} and AIME2025~\citep{aime25} from three models QwQ-32B~\citep{team2025qwq}, DeepSeek-R1-Distill-Qwen-32B~\citep{guo2025deepseek}, and GPT-4.1~\citep{openai2025gpt41}. For these models, Trajectory-RRI is strongly correlated with reasoning length (Spearman’s correlation coefficient, $\rho=0.91$, $0.95$, and $0.73$, respectively, on GSM8K). We find a similarly strong association between Trajectory-RRI and reasoning length on AIME2025~\citep{aime25}. This consistent relationship indicates that reasoning trajectories that are more repetitive, as determined by our measure, also tend to be longer and thus less efficient, highlighting a practical utility of Trajectory-RRI. 

\noindent \textbf{Leveraging Trajectory-RRI for Efficient Data Selection:}
Beyond post-hoc analysis, we demonstrate that Trajectory-RRI can be directly incorporated into reasoning data selection for supervised fine-tuning (SFT). 
Using GSM8K~\citep{cobbe2021gsm8k}, we construct training datasets with different levels of average Trajectory-RRI by selecting reasoning trajectories generated by QwQ-32B~\citep{team2025qwq} and DeepSeek-R1-Distill-Qwen-32B~\citep{guo2025deepseek}. We use these datasets to fine-tune Qwen2.5-7B-Instruct~\citep{qwen2.5} with Low-Rank Adaptation (LoRA)~\citep{hu2021lora}. We find that the efficiency of the fine-tuned models, evaluated using an LLM judge, is strongly negatively correlated ($\rho=-0.97$) with the average Trajectory-RRI of the training data, while task performance remains largely preserved. These findings establish Trajectory-RRI as an actionable signal for selecting reasoning trajectories during fine-tuning, providing a practical approach to training models to reason more efficiently without sacrificing task performance.

\subsection{Related Works} \textit{Classical information theoretic measures} offer a lens to theoretically analyze and evaluate different aspects of LLMs ~\citep{farquhar2024detecting,kossen2024semantic,ali2025entropy,ton2024understanding,gan2025rethinking,yong2025think}. Recent works \citet{gan2025rethinking, ton2024understanding,gan2025rethinking,yong2025think} use information-theoretic measures to evaluate LLM reasoning. \citet{gan2025rethinking} explore slow-thinking in the multi-step reasoning of LLMs using information theory. \citet{ton2024understanding} use conditional mutual information to quantify stepwise information gain and identify failure modes in CoT reasoning. \citet{yong2025think} introduce two metrics, namely InfoGain and InfoBias, to quantify stepwise information gain and divergence from the original reasoning path, respectively, utilizing classical information-theoretic measures. \emph{Although these works demonstrate the utility of information-theoretic measures for analyzing LLM reasoning, the structural interactions between reasoning steps, particularly their joint and individual contributions to the final answer, remain less explored. Also, theory-grounded approaches for detecting step-level repetitiveness without reference reasoning trajectories remain underexplored.}

\textit{Partial Information Decomposition} (PID)~\citep{williams2010nonnegativePID, venkatesh2024gaussian, goswami2023computing, pakman2021estimating, lyu2024explicit} is an emerging area of research of information theory, with growing applications across diverse domains, including representation analysis, fairness, multimodal learning, and the analysis of large language models (LLMs), vision-language models (VLMs), and diffusion models~\citep{tax2017partial, dutta2020information, dutta2021fairness,hamman2024demystifying, hammanUnified, ehrlich2022partial, liang2023multimodal, wollstadt2023rigorous, mohamadi2023more, diffusionPID, dissanayake2024quantifying, haldertowards,halder2025learning,xiu2026comprehensive}. For instance, PID has been used to quantify fairness in \citet{dutta2020information, dutta2021fairness,hamman2024demystifying, hammanUnified} (also see \citet{entropy_survey} for a survey). \citet{haldertowards} propose an explainability framework leveraging PID to disentangle the nature of spurious associations in a dataset. Beyond dataset analysis, \citet{halder2025learning} and \citet{dissanayake2024quantifying} incorporate objectives based on redundant information into their optimization frameworks, for invariant graph learning and knowledge distillation, respectively. \citet{liang2023multimodal} employ PID to analyze multimodal interactions, while \citet{diffusionPID} leverage PID to interpret diffusion models. \citet{xiu2026comprehensive} measure the “information spectrum” of Large vision-language models (LVLMs) by decomposing decision-relevant information into redundant, unique, and synergistic components. \emph{Despite these advances, the application of PID to reasoning processes, particularly to analyze the informational structure of intermediate reasoning steps, remains largely unexplored. In this work, we leverage PID to the study of structured reasoning, introducing a principled framework to quantify repetitiveness at the step level and then at the trajectory level.}

\subsection{Background on PID}\begin{wrapfigure}[9]{r}{0.29\columnwidth}  
\vspace{-15pt}
  \centering
  \includegraphics[width=0.9\linewidth]{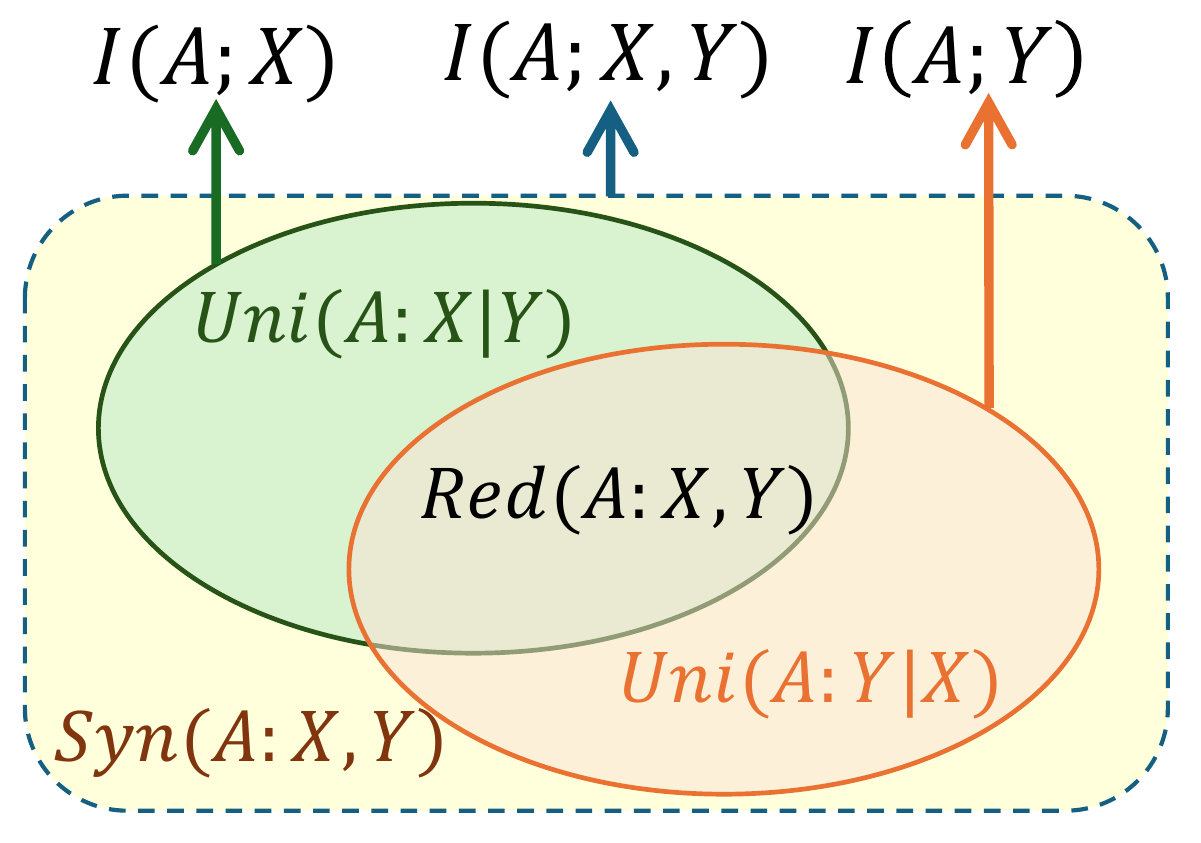}
  \caption{PID of $\mut{A}{X,Y}$}
  \label{fig:PID}
\end{wrapfigure} The classical measure of the total information about a target variable $A$ that is contained in two random variables $X$ and $Y$ is given by mutual information $\mut{A}{X,Y}$~\citep{cover2012elements}. However, mutual information $\mut{A}{X,Y}$ does not disentangle what is uniquely contributed by each or shared by both. An emerging body of work called Partial Information Decomposition (PID) goes beyond classical measures and decomposes the joint information about a target $A$ among multiple random variables $X$ and $Y$ into four non-negative measures (see Fig. \ref{fig:PID}). Here,  unique information $\uni{A}{X|Y}$ and $\uni{A}{Y|X}$ capture the information that is exclusively provided by $X$ or $Y$. Redundant information $\rdn{A}{X,Y}$ is the shared information about $A$ in $X$ and $Y$, and synergistic information $\syn{A}{X,Y}$ emerges only when $X$ and $Y$ are both present together (complementary). We now formally define these quantities.
\begin{defn}[Unique information \citep{bertschinger2014quantifying}]
\label{def:brojaUni} Let $\Delta$ be the set of all joint distributions on $(A, X, Y)$ and $\Delta_P\!=\!\{Q_{AXY}{\in} \Delta$: $Q_{AX}\!=\!P_{AX}$ and $Q_{AY}\!=\!P_{AY}\}$ be the set of joint distributions with same marginals on $(A,X)$ and $(A,Y)$ as the true distribution $P_{AXY}$. Then, $
\uni{A}{X|Y}:=\min _{Q \in \Delta_P} \mutd{Q}{A}{X|Y}. $
Here, $\mutd{Q}{A}{X|Y}$ is the conditional mutual information under joint distribution $Q_{AXY}$ instead of $P_{AXY}$.
\end{defn}

Defining any one of the PID terms is sufficient to derive the others, due to the following relationship:  
\begin{equation}
\mut{A}{X} = \uni{A}{X|Y}+ \rdn{A}{X,Y}.\label{eq:pid_2}
\end{equation}
Intuitively, $\rdn{A}{X,Y}$ can be interpreted as the overlapping portion between $\mut{A}{Y}$ and $\mut{A}{X}$ (see Fig.~\ref{fig:PID}). Hence, $\rdn{A}{X,Y} = \mut{A}{X} - \uni{A}{X |Y}.$ Finally, the synergy corresponds to the remaining information: $
\syn{A}{X,Y} = \mut{A}{X,Y} - \uni{A}{X|Y}  - \uni{A}{Y|X} - \rdn{A}{X,Y}$,  which can be computed once the unique and redundant information terms have been obtained. We also note an additional complementary relationship with conditional mutual information as follows:
\begin{equation}
\mut{A}{X|Y} = \mut{A}{X,Y}-\mut{A}{Y}=\uni{A}{X|Y}+ \syn{A}{X,Y}.\label{eq:pid_3}
\end{equation}

\noindent \textit{Toy Example:} Let $Z{=}(Z_1,Z_2,Z_3)$ with each $Z_i{\sim}$ \iid{} Bern(1/2). Let $X=(Z_1,Z_2,Z_3\oplus N)$, $Y=(Z_2,N)$, and $N\sim $  Bern(1/2) which is independent of $Z$. Here,  $\mathrm{I}(Z;X,Y)=3$ bits. The unique information about $Z$ only in $X$ and not in $Y$ is effectively in $Z_1$. Thus, $\uni{Z}{X|Y} = \mut{Z}{Z_1} = 1$ bit. Redundant information about $Z$ that is in both $X$ and $Y$ is effectively in $Z_2$ and is given by $\rdn{Z}{X,Y}=\mathrm{I}(Z;Z_2)=1$ bit. Synergistic information about $Z$ that is not in either $X$ or $Y$ alone, but is in both of them together is effectively in the tuple $(Z_3\oplus N,N)$, and is given by $\syn{Z}{X,Y} {=} \mut{Z}{(Z_3\oplus N,N)}=1 $ bit. This accounts for the $3$ bits in $\mut{Z}{X,Y}$. 

\section{Proposed Interpretability Framework: \strategy}
\textbf{Notations.} Given an input question $Q$, we denote the LRM-generated reasoning trajectory by $S = \{S_1, S_2, \ldots, S_T\}$, where each $S_i$ represents an intermediate reasoning step. We use $S_{<i}=\{S_1, S_2, .....S_{i-1}\} $ to denote all reasoning steps prior to step $i$. The final answer is denoted by $A$.  We segment the reasoning path into paragraph-level units 
and consider each segment as a distinct reasoning step toward obtaining the final answer $A$, following~\citet{yong2025think}.

\textbf{PID in reasoning trajectories.} The total information of two consecutive reasoning steps $S_{i-1}$ and $S_{i}$ about the final answer $A$ can be decomposed into four non-negative components as follows:
\begin{align} 
\mut{A}{S_{i-1},S_{i}} {=} \rdn{A}{S_{i-1},S_{i}}+ \uni{A}{S_{i-1}\given S_{i}} + \uni{A}{S_{i}\given S_{i-1}} + \syn{A}{S_{i-1},S_{i}}. \label{reasoning_pid}
\end{align}

Each term in \eqref{reasoning_pid} interprets how the reasoning steps contribute toward the final answer $A$. 

Redundant information $\rdn{A}{S_{i-1},S_i}$ captures the overlapping information preserved in both steps. For example, consider an arithmetic problem where the first two reasoning steps are, $S_1$ = $\tfrac{x}{2}$ and $S_2$ = $\tfrac{x}{2} \times \tfrac{3}{5}$, and the final answer $A = x - \tfrac{x}{2} - \tfrac{x}{2} \times \tfrac{3}{5} = \tfrac{x}{5} $. In this case, the term $\tfrac{x}{2}$ computed in $S_1$ is reused in $S_2$ for subsequent computation. As a result, the two steps share overlapping information about the final answer, which results in redundant information. 

Unique information $\uni{A}{S_{i-1}\given S_{i}}$ or $\uni{A}{S_{i}\given S_{i-1}}$ captures the information in one step that cannot be obtained from the other. In the previous example, step $S_2$ involves an additional computation, taking $\frac{3}{5}$ of $\frac{x}{2}$ which is not present in $S_1$ and therefore unique to $S_2$. This implies that $S_2$ provides unique information about the final answer that cannot be obtained from $S_1$. 

Synergistic information $\syn{A}{S_{i-1},S_{i}}$ arises when both steps jointly provide complementary information necessary for determining the final answer, which cannot be obtained from either step alone. For example, consider an arithmetic problem where the first two reasoning steps are $S_1$ = $3x$ and $S_2$ = $2y$, and the final answer $A = 3x+2y$. Here, neither $S_1$ nor $S_2$ is sufficient to determine the final answer $A$. $A$ is obtained only when they are considered together, resulting in synergy.

\textbf{PID under errors.} Intuitively, if a reasoning step contains errors, its task-relevant contribution toward the final answer is diminished. As a result, we expect a reduction in its unique information compared to the previous, non-erroneous case. We formalize this argument in the following theorem.

\begin{thm}[Uniqueness under noise]
\label{thm:uniqueness_noise}

Consider an erroneous step $S_i' = S_i + N$ where N is independent additive noise, $N \perp (A,S_i,S_{i-1})$. Then, $\uni{A}{S_i|S_{i-1}}\geq \uni{A}{S'_i|S_{i-1}}.$

\end{thm}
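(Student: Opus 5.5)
The argument is a data-processing (garbling) argument carried out inside the BROJA optimization of Definition~\ref{def:brojaUni}. We start from a minimizer for the noisy problem and push it back to a feasible point of the clean problem without increasing the objective. Write $X=S_i$, $X'=S_i'=X+N$, and $Y=S_{i-1}$. Because $N\perp(A,X,Y)$, the joint law factorizes as $P_{AXX'Y}=P_{AXY}\,P_{X'|X}$, where $P_{X'|X}(x'|x)=P_N(x'-x)$ is a fixed channel that does not depend on $(A,Y)$. In particular, $P_{AX'}$ is obtained from $P_{AX}$ by applying this channel, and $P_{AY}$ is shared by both problems.

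\emph{Step 1 (lifting feasible points).} Take any $Q\in\Delta_P$ for the clean triple, so $Q_{AX}=P_{AX}$ and $Q_{AY}=P_{AY}$. Define $\tilde Q_{AXX'Y}=Q_{AXY}\,P_{X'|X}$ and let $Q'$ be its $(A,X',Y)$-marginal. Then $Q'_{AY}=Q_{AY}=P_{AY}$, and $Q'_{AX'}=\sum_x P_{AX}(a,x)P_N(x'-x)=P_{AX'}$. Hence $Q'$ is feasible for the noisy problem, i.e.\ $Q'\in\Delta_{P'}$ with $P'=P_{AX'Y}$.

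\emph{Step 2 (conditional data processing).} Under $\tilde Q$ we have the Markov chain $(A,Y)-X-X'$, so conditionally on $Y$ the chain $A-X-X'$ holds. The conditional data-processing inequality then gives $\mutd{Q'}{A}{X'|Y}\le \mutd{Q}{A}{X|Y}$. One way to see this is to expand $\mutd{\tilde Q}{A}{X,X'|Y}$ in two ways and use $\mutd{\tilde Q}{A}{X'|X,Y}=0$.

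\emph{Step 3 (conclude).} Choose $Q^*$ to attain $\uni{A}{S_i|S_{i-1}}=\min_{Q\in\Delta_P}\mutd{Q}{A}{X|Y}$. The minimum exists by compactness in the finite, discretized setting; otherwise use an $\varepsilon$-minimizer. Steps 1 and 2 give
\[
\uni{A}{S_i'|S_{i-1}}\le \mutd{Q^{*\prime}}{A}{X'|Y}\le \mutd{Q^*}{A}{X|Y}=\uni{A}{S_i|S_{i-1}}.
\]

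The delicate point is Step 1, and specifically the choice of direction. We must map feasible points of the clean problem into the noisy feasible set; going the other way would require inverting the noise channel. The full independence $N\perp(A,S_i,S_{i-1})$ is exactly what makes the garbling act identically on every $Q\in\Delta_P$. That is why the $(A,X')$ marginal is preserved and $\tilde Q$ satisfies the Markov chain $(A,Y)-X-X'$. The additive form plays no special role: the argument works for any channel applied to $S_i$ independently of $(A,S_{i-1})$. The only remaining care is the measure-theoretic detail for continuous $N$, which disappears under the paper's discretization.
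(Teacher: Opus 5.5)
Your argument is correct and is essentially the paper's proof: the paper also takes the minimizer $Q^*\in\Delta_P$ and pushes it forward through the channel $P(s_i'\mid s_i)$ to get $Q^{*\prime}\in\Delta_{P'}$. It then applies the conditional data-processing inequality along $(A,S_{i-1})-S_i-S_i'$; your Step~1 adds the explicit marginal check that the paper only asserts. Fix your opening sentence, though: it describes the reverse direction (starting from a noisy minimizer and pulling it back), which your Steps~1--3 do not use and your closing paragraph correctly rules out.
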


Theorem~\ref{thm:uniqueness_noise} suggests that introducing synthetic noise into a reasoning step reduces its predictive contribution to the final answer. For example, consider the arithmetic problem in Fig.~\ref{fig:summary} with reasoning steps $S_1$ = $3x$ and $S_2$ = $2y$, and the final answer $A = 3x+2y$. If the calculation in $S_2$ becomes erroneous due to injected noise, $S'_2 = 2y +N$, where N is independent random noise, then $\uni{A}{S_2|S_{1}} \geq \uni{A}{S'_2|S_{1}}$. This indicates that the predictive power of $S_2$ towards $A$ cannot increase under noise. The proof is given in Appendix~\ref{app:theorem1}. 

\textbf{PID to identify repetitive reasoning.} If the current step $S_i$ revisits prior computations in $S_{<i} = \{S_1, S_2, .....S_{i-1}\}$  or presents an alternative formulation to reach the same conclusion, we would consider $S_i$ as repetitive. To identify repetitive reasoning, we propose the following decomposition:

\begin{prop}[Decomposition for repetitive reasoning]
\label{prop:total_info}
The total information of $S_{<i}$ and $S_{i}$ about the final answer $A$ can be decomposed into four non-negative components:
\begin{align*} 
\mut{A}{S_{<i},S_{i}} = \rdn{A}{S_{<i},S_i} +\uni{A}{S_{<i}\given S_{i}} + \uni{A}{S_i\given S_{<i}} + \syn{A}{S_{<i},S_{i}}.
\end{align*}
\end{prop}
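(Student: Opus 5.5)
The plan is to treat $X := S_{<i}$ (the tuple $(S_1,\ldots,S_{i-1})$, viewed as a single random variable on the product alphabet) and $Y := S_i$, and apply the bivariate PID of Definition~\ref{def:brojaUni} with target $A$. First I will take $\uni{A}{S_{<i}\given S_i}$ and $\uni{A}{S_i\given S_{<i}}$ as the minima of $\mutd{Q}{A}{S_{<i}|S_i}$ and $\mutd{Q}{A}{S_i|S_{<i}}$ over $\Delta_P$, the set of joint distributions matching $P_{A S_{<i}}$ and $P_{A S_i}$. Then I will define $\rdn{A}{S_{<i},S_i}$ through \eqref{eq:pid_2} and $\syn{A}{S_{<i},S_i}$ as the remainder. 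With these definitions the stated identity holds by construction, so the content to prove is consistency and non-negativity.

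\textbf{Step 1 (consistency of redundancy).} I need $\mut{A}{S_{<i}} - \uni{A}{S_{<i}\given S_i} = \mut{A}{S_i} - \uni{A}{S_i\given S_{<i}}$. For any $Q\in\Delta_P$, the chain rule gives
\begin{align*}
\mutd{Q}{A}{S_{<i},S_i} = \mut{A}{S_i} + \mutd{Q}{A}{S_{<i}|S_i} = \mut{A}{S_{<i}} + \mutd{Q}{A}{S_i|S_{<i}},
\end{align*}
using that the pairwise marginals agree with $P$. Hence both conditional terms differ from $\mutd{Q}{A}{S_{<i},S_i}$ by constants. They are therefore minimized by the same $Q^*$, and the two expressions for redundancy coincide.

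\textbf{Step 2 (non-negativity).} Unique information is a minimum of conditional mutual informations, so it is $\ge 0$. For redundancy, since $\mutd{Q}{A}{S_{<i}|S_i}\le \mutd{Q}{A}{S_{<i},S_i}$, it suffices to evaluate at the product coupling $Q_0 = P_A P_{S_{<i}|A}P_{S_i|A}\in\Delta_P$. Under $Q_0$, $S_{<i}\perp S_i\mid A$, so $\mutd{Q_0}{A}{S_{<i}|S_i} = \mut{A}{S_{<i}} - \mutd{Q_0}{S_{<i}}{S_i} \le \mut{A}{S_{<i}}$. Thus $\uni{A}{S_{<i}\given S_i}\le \mut{A}{S_{<i}}$ and $\rdn{A}{S_{<i},S_i}\ge 0$. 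For synergy, \eqref{eq:pid_3} gives $\syn{A}{S_{<i},S_i} = \mut{A}{S_{<i}|S_i} - \uni{A}{S_{<i}\given S_i}$. This is $\ge 0$ because $P\in\Delta_P$, so the minimum cannot exceed the value $\mutd{P}{A}{S_{<i}|S_i}$.

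\textbf{Step 3 (assembly).} Substituting \eqref{eq:pid_2} for both sources and \eqref{eq:pid_3} yields
\begin{align*}
\mut{A}{S_{<i},S_i} = \mut{A}{S_i} + \mut{A}{S_{<i}|S_i} = \rdn{A}{S_{<i},S_i} + \uni{A}{S_i\given S_{<i}} + \uni{A}{S_{<i}\given S_i} + \syn{A}{S_{<i},S_i}.
\end{align*}

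The main obstacle I expect is Step 2's redundancy bound: showing the minimizer cannot make the unique information exceed $\mut{A}{S_{<i}}$. The conditional-independence coupling handles it. The only other subtlety is confirming that grouping $S_{<i}$ into a single variable is legitimate, which it is because Definition~\ref{def:brojaUni} places no restriction on the alphabet of the sources.
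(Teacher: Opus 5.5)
Your proposal is correct and follows the paper's route. The paper gives no separate proof: it obtains the proposition by instantiating the bivariate PID of its background section (Definition~\ref{def:brojaUni}, \eqref{eq:pid_2}, and synergy as the remainder) with sources $S_{<i}$ and $S_i$, exactly as you do, and it takes the well-definedness of $\mathrm{Red}$ and the non-negativity of all four terms as known properties of the decomposition of \citet{bertschinger2014quantifying}, whereas you verify them directly (the chain rule on $\Delta_P$ for consistency, the conditionally independent coupling for $\mathrm{Red}\ge 0$, and $P\in\Delta_P$ for $\mathrm{Syn}\ge 0$).
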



When $S_i$ is repetitive, we expect redundant information $\rdn{A}{S_{<i},S_i}$ to be high and dominant over other terms. This leads to our next proposition, a step-wise measure of repetitiveness that contrasts the redundant contribution of \(S_i\) with its unique and synergistic contributions.

\begin{defn}[Step-wise Repetitive Reasoning Index (Step-RRI)]
\label{prop:red}
For some $\eta \geq 1$ and $i > 1$, our proposed measure of repetitiveness in a reasoning step $S_i$ is defined as, \begin{equation} \text{Step-RRI} := \mathrm{RRI}_i =  \rdn{A}{S_i,S_{<i}} - \eta \cdot \max\{\uni{A}{S_i|S_{<i}}, \syn{A}{S_i,S_{<i}}\}.\end{equation} 
\end{defn}

For intuition, consider the simplified case where $S_{<i} {=} S_{i}$. \emph{We will now show how the uniqueness in $S_{i}$ and $S_{<i}$ and synergy become zero, leaving redundancy as the only positive term.} First note that the conditional mutual information $\mut{A}{S_{<i}|S_{i}}=\mut{A}{S_{i}|S_{<i}}=0$ since $\mut{A}{S_{i}|S_{<i}} \overset{\text{(a)}}= H(S_{i}|S_{<i})-H(S_{i}|A,S_{<i}) = 0.$ where (a) is from definition. Then, $\uni{A}{S_{i}|S_{<i}} \overset{(b)}=\mut{A}{S_{i}|S_{<i}} -\syn{A}{S_{i},S_{<i}} \overset{(c)}\leq \mut{A}{S_{i}|S_{<i}} = 0$, where (b) follows from \eqref{eq:pid_3} and (c) is from the non-negativity of PID terms. Similarly,  $\uni{A}{S_{<i}|S_{i}}\leq \mut{A}{S_{<i}|S_{i}} = 0$. Then, $\syn{A}{S_{<i},S_{i}} = \mut{A}{S_{<i}|S_{i}}  - \uni{A}{S_{<i}|S_{i}} $ is also $0$.  Lastly, $\rdn{A}{S_{i},S_{<i}} = \mut{A}{S_{i}} - \uni{A}{S_{i}|S_{<i}} = \mut{A}{S_{i}} = H(A) - H(A|S_{i})$, which is positive as long as there is some dependence between $A$ and $S_{i}$. As a result, Step-RRI will also be positive. We formalize this intuition in the next theorem. The proof (see Appendix~\ref{app:theorem2}) leverages interesting monotonicity properties of the PID measures. In particular, we leverage the fact that $\uni{A}{B|C\cup C'}\leq \uni{A}{B|C}$ (Monotonicity under adversarial side information; see Lemma~\ref{lemma:monotonicity_eve} in Appendix~\ref{app:theory}).



\begin{thm}[Step-RRI under  Repetitive Reasoning]
\label{thm:repititive2}
Let the current step $S_i=(S_A,S_N)$ where $S_A$ is relevant to the answer $A$ and $S_N$ is additional independent text, i.e., $\mut{A}{S_A}>0$ and $S_N \perp A, S_{<i}$ (noise). Suppose the current step is repetitive with the logic already contained in past steps $S_{<i}$, i.e., the relevant part $S_A \subseteq S_{<i}$. Then, we have: (i) $\uni{A}{S_{i}|S_{<i}}$ and $\syn{A}{S_{i},S_{<i}}$ are both zero; (ii) $\text{Step-RRI} = \rdn{A}{S_{i},S_{<i}} - \eta \cdot \max\{\uni{A}{S_i|S_{<i}}, \syn{A}{S_i,S_{<i}}\} >0.$  
\end{thm}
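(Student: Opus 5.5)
The approach is to show that the conditional mutual information $\mut{A}{S_i|S_{<i}}$ vanishes. By \eqref{eq:pid_3} this quantity equals $\uni{A}{S_i|S_{<i}}+\syn{A}{S_i,S_{<i}}$, so non-negativity of the PID terms then forces both terms to zero, which is part (i). Part (ii) then reduces to showing $\rdn{A}{S_i,S_{<i}}>0$.

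\textbf{Vanishing of the conditional mutual information.} Write $S_i=(S_A,S_N)$ and apply the chain rule:
\begin{align*}
\mut{A}{S_A,S_N|S_{<i}} = \mut{A}{S_A|S_{<i}} + \mut{A}{S_N|S_{<i},S_A}.
\end{align*}
The first term is zero because $S_A\subseteq S_{<i}$, i.e.\ $S_A$ is a function of $S_{<i}$, so $H(S_A|S_{<i})=0$. The second term equals $\mut{A}{S_N|S_{<i}}$, again because $S_A$ is determined by $S_{<i}$. I will read the hypothesis $S_N\perp A,S_{<i}$ as joint independence, $S_N\perp(A,S_{<i})$. Then
\begin{align*}
\mut{A}{S_N|S_{<i}} \le \mut{A,S_{<i}}{S_N} = 0.
\end{align*}
Hence $\mut{A}{S_i|S_{<i}}=0$, and part (i) follows.

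\textbf{Positivity of redundancy.} Using \eqref{eq:pid_2} with $X=S_i$ and $Y=S_{<i}$,
\begin{align*}
\rdn{A}{S_i,S_{<i}} = \mut{A}{S_i}-\uni{A}{S_i|S_{<i}} = \mut{A}{S_i}.
\end{align*}
Moreover $\mut{A}{S_i}=\mut{A}{S_A,S_N}\ge \mut{A}{S_A}>0$. Since the max term in the Step-RRI is zero by part (i), this gives Step-RRI $=\mut{A}{S_i}>0$ for every $\eta\ge 1$, which is part (ii).

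\textbf{Alternative route and the main obstacle.} The monotonicity lemma (Lemma~\ref{lemma:monotonicity_eve}) offers a second route to (i): $\uni{A}{S_i|S_{<i}}\le \uni{A}{S_i|S_A}$, since adding side information can only reduce unique information. This would then require bounding $\uni{A}{(S_A,S_N)|S_A}\le \mut{A}{S_N|S_A}=0$, which the chain-rule argument above already handles. The main obstacle is therefore the precise meaning of $S_A\subseteq S_{<i}$ and of the noise independence. I will formalize the first as ``$S_A$ is a deterministic function of $S_{<i}$'' and the second as joint independence of $S_N$ from $(A,S_{<i})$. If only pairwise independence were intended, the conditional term $\mut{A}{S_N|S_{<i}}$ could be positive through synergy, and the argument would fail, so I would state the joint-independence reading explicitly.
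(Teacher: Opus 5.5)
Your proof is correct. Part (i) is essentially the paper's argument. The paper also bounds $\uni{A}{S_i|S_{<i}}$ and $\syn{A}{S_i,S_{<i}}$ by $\mut{A}{S_i|S_{<i}}$ and kills that term with the chain rule. It only expands in the opposite order, $\mut{A}{S_N|S_{<i}}+\mut{A}{S_A|S_{<i},S_N}$. Its steps (e) and (f) tacitly use the same joint-independence and ``$S_A$ is a function of $S_{<i}$'' readings that you state explicitly.

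Part (ii) is where you genuinely differ. You use the $S_i$-side identity $\rdn{A}{S_i,S_{<i}}=\mut{A}{S_i}-\uni{A}{S_i|S_{<i}}$ together with part (i). This gives the exact value $\text{Step-RRI}=\mut{A}{S_i}$, which in fact equals $\mut{A}{S_A}$, since $\mut{A}{S_N|S_A}\le\mut{A,S_{<i}}{S_N}=0$ under your assumptions. It needs no PID-specific machinery, and it is exactly the heuristic the paper sketches just before the theorem.

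The paper instead writes the redundancy from the $S_{<i}$ side as $\mut{A}{S_{<i}}-\uni{A}{S_{<i}|S_i}$. It then invokes Lemma~\ref{lemma:monotonicity_eve} to replace the conditioning variable $S_i=(S_A,S_N)$ by $S_A$, passes to $\rdn{A}{S_A,S_{<i}}=\mut{A}{S_A}-\uni{A}{S_A|S_{<i}}$, and bounds the unique term by $\mut{A}{S_A|S_{<i}}=0$.

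Your route is shorter and more elementary, and it yields equality rather than a lower bound. The paper's route buys a redundancy bound $\rdn{A}{S_i,S_{<i}}\ge\mut{A}{S_A}$ that uses only $S_A\subseteq S_{<i}$ and $S_A$ being a component of $S_i$, not the independence of $S_N$. That bound would therefore survive a weakened noise assumption, although positivity of Step-RRI as stated still needs part (i) to remove the $\eta$-term.
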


Theorem~\ref{thm:repititive2} highlights that when a reasoning step $S_i$ only repeats information already contained in $S_{<i}$, its unique and synergistic information contributions vanish, and the step becomes purely redundant with respect to the answer, resulting in a positive $\text{Step-RRI}$. 

Next, to characterize reasoning repetitiveness across the entire reasoning trajectory at the individual-problem level, we define \emph{Trajectory-RRI}. Trajectory-RRI quantifies the overall repetitiveness of a reasoning trajectory for a given question by counting the number of reasoning steps identified as repetitive by our measure Step-RRI.

\begin{defn}[Trajectory Repetitive Reasoning Index (Trajectory-RRI)]
\label{def:point-wiseRRI}
For a given question $Q$ with a reasoning trajectory of length $T$, we define the Trajectory-RRI as,
$\text{Trajectory-RRI} := \mathrm{RRI}(Q)
=
\sum_{i=2}^{T}
\mathbb{I}\left[\mathrm{RRI}_i > 0\right],
$
where $\mathbb{I}[\cdot]$ denotes the indicator function.
\end{defn}

\textbf{Implementation of \strategy.} The estimation of PID measures for high-dimensional data with continuous target variables is a non-trivial task. We propose a tractable approach for PID estimation at the step level in reasoning trajectories. Our objective is to obtain stable PID estimates that are both intuitive and interpretable. \strategy consists of several key steps (see Fig.~\ref{fig:framework}) as follows:\begin{figure}[!htbp]
    \centering
    \includegraphics[width=0.97\linewidth]{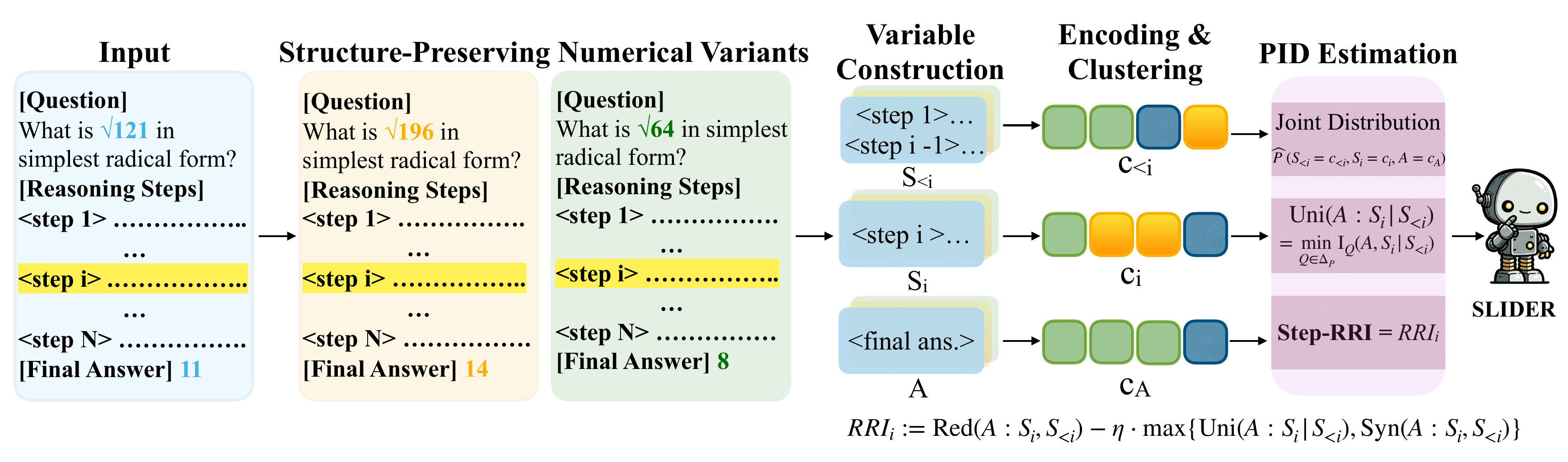}
    \caption{\emph{Our proposed framework \strategy}. 
    Given a question and its reasoning trajectory, the sample generator creates multiple variants by changing only the numerical values while preserving the underlying problem structure. Each generated reasoning trajectory is then segmented into individual reasoning steps using the \texttt{<step i>} tags, from which we construct the preceding reasoning $S_{<i}$ and the current step $S_i$ in text form. Then, we obtain embeddings of the current step $S_i$, preceding steps $S_{<i}$, and final answer $A$ using an encoder-only language model and discretize the resulting embeddings via $k$-means clustering. Finally, we estimate the joint distribution from the discrete representations, compute the PID measures, and Step-RRI.\vspace{-10pt}}
    
    \label{fig:framework}
\end{figure}

\emph{Structure-Preserving Sample Generation.} To estimate PID at the individual reasoning-step level, we require multiple samples to estimate the underlying joint distributions. A single reasoning trajectory provides insufficient samples for reliable estimation, while independently generated trajectories may not match exactly in their reasoning structure and step correspondence. To address this challenge, we construct multiple \emph{structure-preserving numerical variants} of each problem, ensuring exact semantic and structural alignment of corresponding steps across trajectories. These aligned samples enable the estimation of step-level joint distributions and, subsequently, the PID measures and RRI measures.

Specifically, for each original question and its associated reasoning
trajectory, we use GPT-4o-mini~\citep{openai2024gpt4omini} (the prompt template is in
Appendix~\ref{app:data-generation}), unless otherwise stated, to iteratively generate numerical variants. The generator modifies only the numerical values while preserving the question format, reasoning logic, step ordering, and \texttt{<step i>} structure. A generated sample is accepted only if its trajectory contains the same number of reasoning steps as its input, ensuring correspondence between $S_i$ across samples. Each accepted sample is then used as the reference for the next generation step, forming a recursive chain that increases numerical diversity while preserving the underlying reasoning structure. If the generated trajectory does not preserve the step count, the sample is discarded, and generation is reset to the original example. We generate up to $50$ variants per problem and terminate generation after five consecutive structural mismatches. This procedure yields a collection of aligned reasoning trajectories from which we estimate the joint distributions of the step-level representations required for PID.

\emph{Encoding and PID Estimation.} Each generated reasoning trajectory is first segmented into individual reasoning steps using the \texttt{<step i>} tags. For each step \(i\), we define the tagged segment as the current-step variable \(S_i\) and construct the preceding-reasoning variable \(S_{<i}\) by concatenating all steps before \(S_i\) in their original textual form. Next, we use an encoder-only LLM, all-MiniLM-L6-v2~\citep{reimers2019sentence}, unless otherwise stated, to obtain embeddings of the preceding steps \(S_{<i}\), the current step \(S_i\), and the final answer \(A\), respectively. We then discretize these continuous embeddings using k-means clustering with $10$ clusters to obtain categorical variables (refer to Table~\ref{tab:pid_cluster_sensitivity}, Appendix~\ref{app:cluster_sensitivity} for a sensitivity analysis of the number of clusters). Based on the resulting cluster assignments, we estimate the joint distribution, $\widehat{P}(S_{<i}=c_{<i},\,S_i=c_i,\,A=c_A)$, with a slight abuse of notation where we still use $S_i$ to denote the discrete embeddings of the corresponding steps. Then, according to Definition~\ref{def:brojaUni}, we compute the PID measures: redundant information $\rdn{A}{S_{<i},S_i}$, unique information $\uni{A}{S_{<i}\given S_i}$ and $\uni{A}{S_i\given S_{<i}}$, and synergistic information $\syn{A}{S_{<i},S_i}$ from the estimated joint distribution. Specifically, we utilize the CVX estimator proposed by~\citet{liang2023multimodal} for PID estimation. Finally, following Definition~\ref{prop:red} and~\ref{def:point-wiseRRI}, we compute Step-RRI and Trajectory-RRI, respectively. For further implementation details, refer to Appendix~\ref{app:data-generation}.

\section{Experimental Results}
\label{sec:experiments}
In this section, we evaluate \strategy across synthetic arithmetic problems, PRMBench~\citep{song2025prmbench}, GSM8K~\citep{cobbe2021gsm8k}, and AIME2025~\citep{aime25}. We examine the effect of injected errors on PID components, evaluate Step-RRI for detecting step-level ground-truth redundancy against cosine similarity and InfoGain~\citep{yong2025think}, and assess the relationship between Trajectory-RRI and reasoning length across reasoning and non-reasoning models. Finally, we use Trajectory-RRI to guide data selection for fine-tuning Qwen2.5-7B-Instruct~\citep{qwen2.5} toward more efficient reasoning. Additional dataset details are provided in Appendix~\ref{app:experiments}.



\textbf{\strategy uncovers interpretable information dynamics under synthetic noise.} We consider the following problem: \emph{Arithmetic problem:} 
``x = \{x\}, y = \{y\}. Please calculate the following: 
1. $3x$, 
2. $2y$, 
3. $3x + 2y$.'' 
We sample two integers $x$ and $y$ independently and uniformly from $[1, 10^5)$. We generate $500$ samples in exact format, each containing exactly three reasoning steps. The intermediate steps are $S_1 = {3x}$, $S_2 = {2y}$, and $S_3 = {3x+2y}$, and the final answer $A = 3x+2y$ (see Fig.~\ref{fig:cot}(d)). To introduce errors, we perturb an individual step with probability
$p$ by adding an integer sampled uniformly from $[-1000,10^6]$ to its
value, while keeping the remaining steps and the final answer unchanged.
We vary $p$ to analyze how increasing the probability of errors
affects the PID components.\begin{figure}[!htbp]
        \centering
        \includegraphics[width=\linewidth]{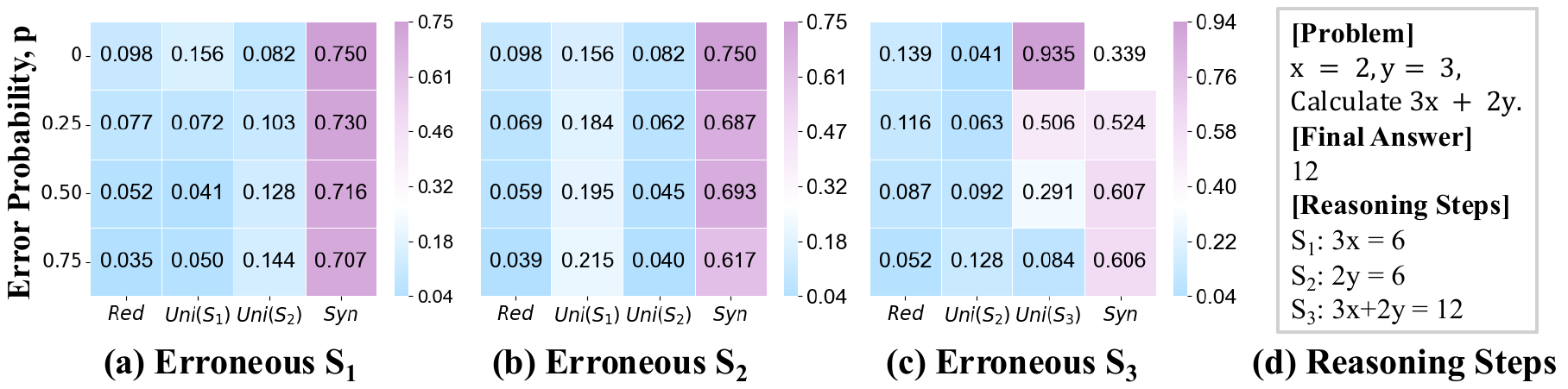}
    \caption{\emph{Interpretable trends of PID measures under errors for synthetic arithmetic problems.} (a,b) Increasing error probability \(p\) reduces synergy and the unique information of erroneous steps, while increasing the unique contribution of correct steps. (c) Perturbing \(S_3\) similarly decreases \(\mathrm{Uni}(S_3)\) as \(p\) increases. (d) Reference example for the reasoning steps.
    \label{fig:cot} } 
    \vspace{-7pt}
    \end{figure}

Fig.~\ref{fig:cot} illustrates how the information contributed by intermediate reasoning steps toward the final answer evolves under synthetic noise. We first consider $S_1=3x$ and $S_2=2y$ in Fig.~\ref{fig:cot}(a,b). For this arithmetic problem, individually, neither $S_1=3x$ nor $S_2=2y$ is sufficient to determine the target $A$. However, when $S_1$ and $S_2$ are considered jointly, their combined information fully specifies $A$ (considering no error in the reasoning steps). This implies that the information about $A$ is predominantly synergistic/complementary across $S_1$ and $S_2$. As the error probability increases in either $S_1$ or $S_2$, synergy decreases because the error limits the inference of the final answer, even when considering both steps together. At the same time, the unique information of the erroneous step decreases, while that of the corresponding correct step increases, as the correct step carries answer-relevant information that is increasingly unavailable from its corrupted counterpart. In Fig.~\ref{fig:cot}(c), we consider the same problem with the last two reasoning steps, $S_2 = 2y$ and $S_3 = 3x + 2y$, and the final answer $A = 3x + 2y$. When $S_3$ is correct, it directly determines the final answer $A$ and therefore contributes substantial unique information relative to $S_2=2y$. However, if there is an error in the calculation in $S_3$, its uniqueness decreases, since some examples can no longer provide useful information about $A$ that is not already available from $S_2$. We can also use \strategy to identify incorrectness for the GSM8K word problem~\citep{cobbe2021gsm8k} (see details in Appendix~\ref{app:noise}).

\textbf{\strategy identifies step-wise repetitiveness in reasoning trajectories.}\begin{figure}[!htbp]
    \centering
        \includegraphics[width=\linewidth]{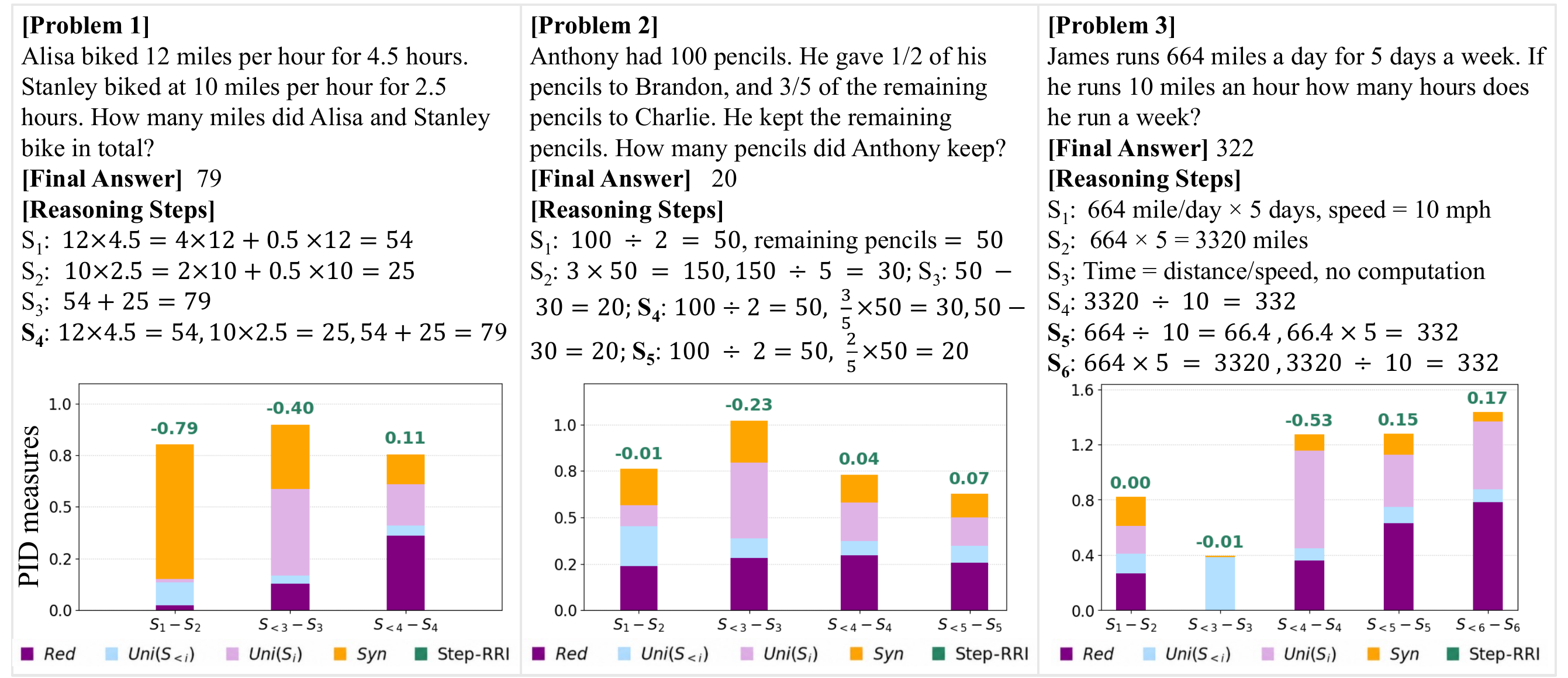}
    \caption{\emph{PID estimation across reasoning trajectories.}
    Redundant information becomes dominant compared to uniqueness in $S_i$ and synergistic information when the current step $S_i$ revisits answer-relevant information already established in the preceding steps $S_{<i}$, resulting in positive Step-RRI. 
}
\vspace{-10pt}
    \label{fig:qwq}
\end{figure} To assess whether Step-RRI aligns with the actual repetitiveness between reasoning steps, we analyze reasoning trajectories generated by QwQ-32B on three GSM8K word problems. Implementation details and additional experiments using reasoning trajectories generated by DeepSeek-R1-Distill-Qwen-32B (see Fig.~\ref{fig:r1_pid}) are provided in Appendix~\ref{app:contribution2}. Fig.~\ref{fig:qwq} presents the corresponding PID and Step-RRI estimates, illustrating how answer-relevant information evolves across reasoning steps and how repetitive steps emerge within each trajectory based on Step-RRI. We observe that for Problem 1, synergistic information is initially dominant, indicating that the early reasoning steps provide complementary information required to reach the final answer. As the reasoning progresses toward the answer, the unique information of the current step increases. Specifically, $\mathrm{Uni}(S_3)$ is maximized at $S_{<3}$--$S_3$ for Problems 1 and 2, while $\mathrm{Uni}(S_4)$ is maximized at $S_{<4}$--$S_4$ for Problem 3. This suggests that this reasoning step $S_i$ contributes information that is largely unavailable from previous steps, as it directly contains the final answer.
\strategy further localizes repetitive reasoning to individual steps using Step-RRI.  The redundant contribution dominates the unique contribution of $S_i$ and its synergistic contribution with $S_{<i}$ when the model either repeats earlier computations or performs alternative re-computations that recover answer-relevant information already contained in $S_{<i}$. As a result, Step-RRI becomes positive.\begin{wrapfigure}[13]{r}{0.3\columnwidth}  
  \vspace{-10pt}
  \centering
  \includegraphics[width=\linewidth]{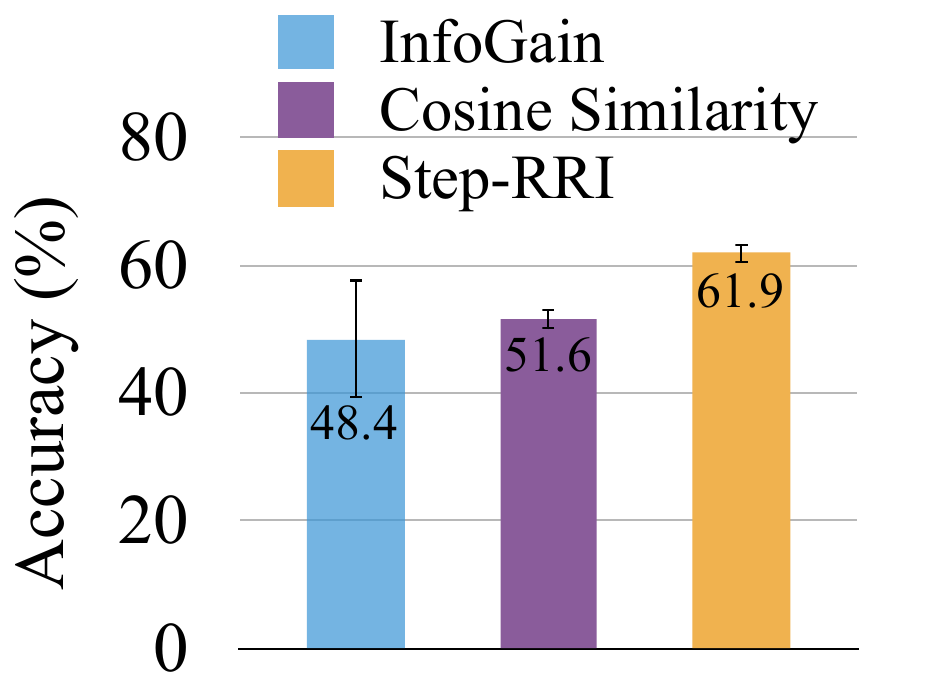}
  \caption{\emph{Step-level redundancy detection.} Step-RRI achieves higher accuracy than baselines on PRMBench.}
  \label{fig:prmbench}
\end{wrapfigure}   Accordingly, with $\eta=1.25$, Step-RRI becomes positive at $S_4$ in Problem~1, $S_4$ and $S_5$ in Problem~2, and $S_5$ and $S_6$ in Problem~3. Thus, Step-RRI enables \strategy to localize repetitive reasoning to specific steps within each problem.

Next, we quantitatively evaluate the effectiveness of Step-RRI for identifying true repetitive reasoning steps using PRMBench~\citep{song2025prmbench} since this dataset has annotations for step-level redundancy. We focus specifically on the redundancy error class and compare Step-RRI against two baselines: InfoGain~\citep{yong2025think} and Cosine Similarity. For each method, we calibrate the decision threshold on a held-out set by training a linear classifier and use the resulting threshold for step-level ground-truth redundancy detection on the test set. Fig.~\ref{fig:prmbench} reports the detection performance. Step-RRI achieves the highest accuracy of $61.91\%$, improving over InfoGain and Cosine Similarity by $13.53$ and $10.31$ percentage points. More details on the baselines are in Appendix~\ref{app:contribution2}.


\textbf{\strategy reveals a strong correlation between Trajectory-RRI and reasoning length.}
To demonstrate the practical utility of Trajectory-RRI, we examine its relationship with actual reasoning length across multiple models. We consider the QwQ-32B~\citep{team2025qwq}, DeepSeek-R1-Distill-Qwen-32B~\citep{guo2025deepseek}, and and GPT-4.1~\citep{openai2025gpt41}. For QwQ-32B and DeepSeek-R1-Distill-Qwen-32B, each response contains a thinking trajectory followed by the final response, separated by the \texttt{</think>} tag; we use the text preceding this tag as the reasoning trajectory. For GPT-4.1, we consider the entire generated response as the reasoning trajectory. See Appendix~\ref{app:contribution3} for additional details on the experimental setup.

We evaluate on GSM8K~\citep{cobbe2021gsm8k} and AIME2025~\citep{aime25}. For GSM8K, we consider 1,000 randomly selected problems with correct final answers, evaluating QwQ-32B, DeepSeek-R1-Distill-Qwen-32B, and GPT-4.1.
\begin{table}
\centering
\small
\setlength{\tabcolsep}{3pt}

\caption{Spearman correlation between average Trajectory-RRI and reasoning length.}
\label{tab:rri_correlation}
\begin{tabular}{lccccccc}
\toprule
& \multicolumn{2}{c}{DeepSeek-R1}
& \multicolumn{2}{c}{QwQ-32B}
& \multicolumn{2}{c}{GPT-4.1} \\
\cmidrule(lr){2-3}
\cmidrule(lr){4-5}
\cmidrule(lr){6-7}
Dataset
& $\rho$ & $p$-val
& $\rho$ & $p$-val
& $\rho$ & $p$-val \\
\midrule
GSM8K
& 0.95 & 0.00
& 0.91 & 0.00
& 0.73 & 0.00 \\
AIME2025
& 0.89 & 0.01
& 0.75 & 0.01
& 0.82 & 0.03 \\
\bottomrule
\end{tabular}
\end{table} 
For each trajectory, we compute Trajectory-RRI using \strategy and measure reasoning length as the number of tokens in the corresponding reasoning trajectory. To mitigate variability from problem-specific step segmentation, we group trajectories by reasoning length and report the average Trajectory-RRI and reasoning length within each bin. As shown in Fig.~\ref{fig:rri_tokens} and Table~\ref{tab:rri_correlation}, average Trajectory-RRI is positively associated with reasoning length across the evaluated settings. On GSM8K, the Spearman correlations (see Appendix~\ref{app:experiments} for its definition) are $\rho=0.95$ for DeepSeek-R1-Distill-Qwen-32B, $\rho=0.91$ for QwQ-32B, and $\rho=0.73$ for GPT-4.1. On AIME2025, DeepSeek-R1-Distill-Qwen-32B, QwQ-32B, and GPT-4.1, yield Spearman correlations of $\rho=0.89$, $0.75$, and $0.82$, respectively. The corresponding $p$-values reported in Table~\ref{tab:rri_correlation} are also small, indicating that these correlations are statistically significant.
\emph{The consistently strong positive correlations across these models indicate that trajectories identified as more repetitive by our measure also tend to require more reasoning tokens.}

\begin{figure*}[t]
    \centering

    \begin{minipage}[t]{0.49\textwidth}
        \centering
        \includegraphics[width=\linewidth]{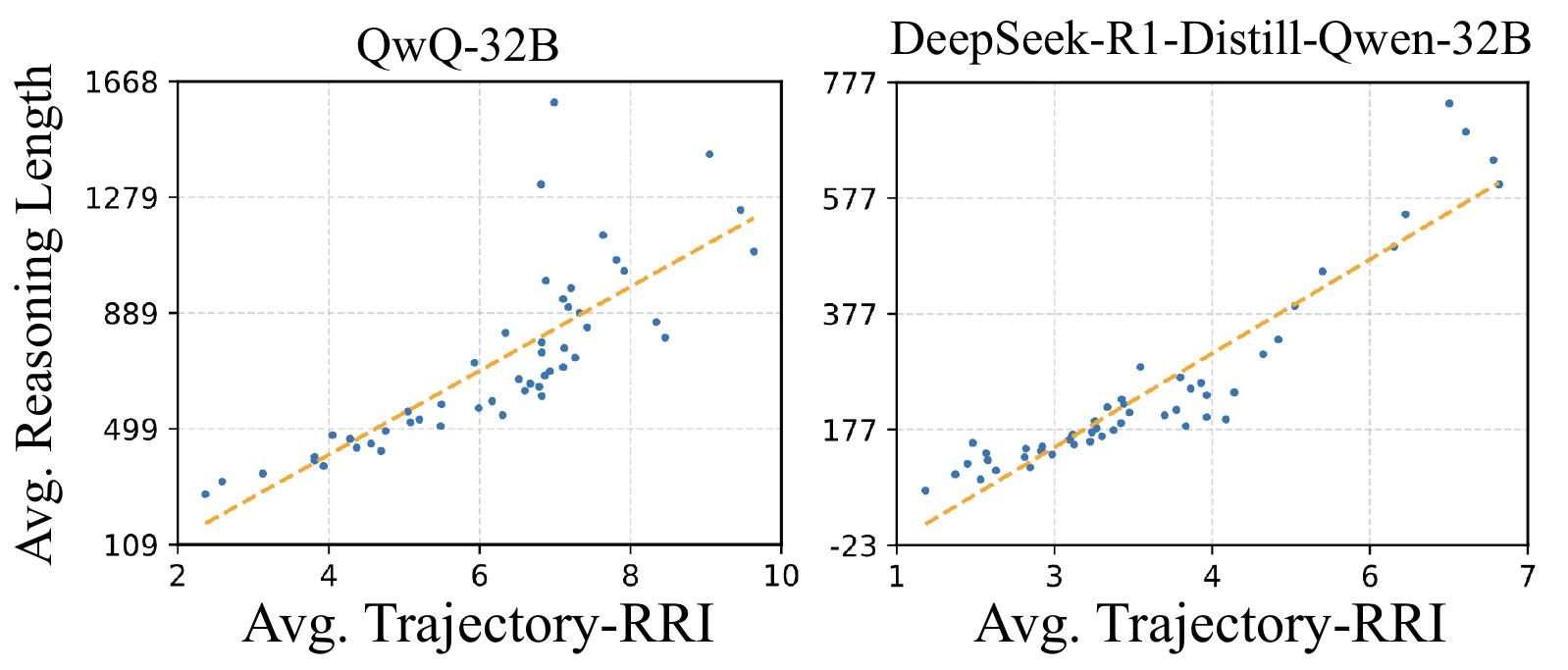}
        \captionof{figure}{
        \emph{Trajectory-RRI strongly correlates with reasoning length.} Avg. Trajectory-RRI exhibits a positive monotonic relationship with average thinking tokens for both models (left; $\rho=0.91$, right; $\rho=0.95$) on GSM8K, indicating higher RRI is associated with longer trajectories.
        }
        \label{fig:rri_tokens}
    \end{minipage}
    \hfill
    \begin{minipage}[t]{0.49\textwidth}
        \centering
        \includegraphics[width=\linewidth]{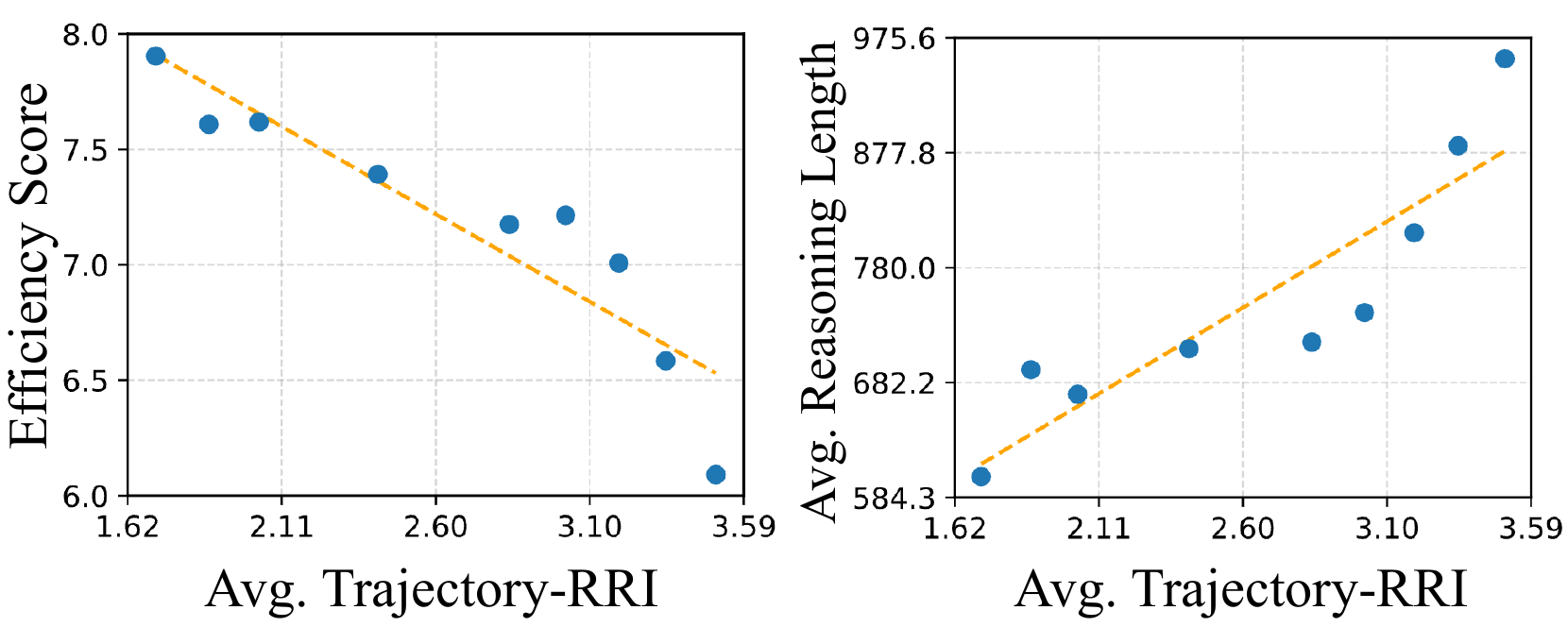}
        \captionof{figure}{
        \emph{Trajectory-RRI guided fine-tuning improves reasoning efficiency.} As the average Trajectory-RRI of the training data increases, the efficiency score of the fine-tuned model on GSM8K decreases (left; $\rho=-0.97$), while its average tokens increase (right; $\rho=0.98$).
        }
        \label{fig:NetRRI}
    \end{minipage}
\vspace{-10pt}
\end{figure*}

\textbf{Trajectory-RRI serves as a signal for SFT data selection.} Motivated by the strong positive correlation between average Trajectory-RRI and reasoning length, we investigate whether average Trajectory-RRI can be used to guide data selection for supervised fine-tuning toward more efficient reasoning while preserving task performance.
We use 1,000 GSM8K problems, along with their corresponding reasoning trajectories generated by QwQ-32B and DeepSeek-R1-Distill-Qwen-32B, yielding two candidate reasoning trajectories for each problem. Using \strategy, we estimate the Trajectory-RRI of each trajectory and construct training sets with varying average Trajectory-RRI by selecting, for each problem, the trajectory with the higher Trajectory- RRI with probability $p$. We then fine-tune Qwen2.5-7B-Instruct~\citep{qwen2.5} on each dataset using Low-Rank Adaptation (LoRA)~\cite{hu2021lora} and evaluate the resulting models on the GSM8K test set in terms of accuracy, reasoning length, and an LLM-as-a-judge efficiency score. As shown in Fig.~\ref{fig:NetRRI}, the average Trajectory-RRI of the training data exhibits a strong negative correlation with an LLM-as-a-judge efficiency score of the fine-tuned models (Spearman's correlation $\rho=-0.97$, $p-\text{val}=2.2\times10^{-5}$) and a strong positive correlation with their actual reasoning length, measured by the total number of tokens generated during inference ($\rho=0.98$, $p-\text{val}=1.9\times10^{-6}$), while test accuracy remains within a comparable range (see Table~\ref{tab:rri_accuracy} in Appendix~\ref{app:NetRRI}). \emph{These results demonstrate that Trajectory-RRI can serve as an actionable signal for curating fine-tuning data: training on trajectories with lower Trajectory-RRI leads to more concise and efficient reasoning without a substantial degradation in task performance.} See Appendix~\ref{app:NetRRI} and~\ref{app:runtime} for additional implementation and computational details, respectively.  

\textbf{Conclusion.} This work introduces \strategy, a novel information-theoretic framework for interpreting reasoning trajectories using Partial Information Decomposition. We propose \emph{Step-RRI} to identify repetitive reasoning at the individual-step level and \emph{Trajectory-RRI} to quantify overall repetitiveness within a reasoning trajectory for a specific problem. Our results demonstrate that RRI measures effectively identify true repetitive reasoning, strongly correlate with actual reasoning length across LLMs, and can guide data selection for fine-tuning toward more efficient reasoning while largely preserving task performance. We discuss limitations and future work in Appendix~\ref{app:limitations}.

\bibliography{iclr2027_conference}
\bibliographystyle{iclr2027_conference}
\newpage
\appendix
\startcontents[appendix]
\begin{center}
    {\LARGE\bfseries Appendix}
\end{center}
\vspace{1em}

\printcontents[appendix]{}{1}{}

\clearpage

\section{Limitations and Broader Impacts}
\label{app:limitations}
\paragraph{Limitations.}
Our framework estimates PID from encoded and discretized representations of reasoning steps. This process may introduce information loss and make the estimated PID components sensitive to the choice of encoder, clustering procedure, and other estimation parameters. Future work could investigate continuous PID estimators and alternative representations to reduce this dependence. In addition, our structure-preserving sample generation relies on an LLM to generate aligned numerical variants, and the quality of the resulting PID estimates depends on how well these variants preserve the underlying reasoning structure. Evaluating step-level incorrectness under synthetic noise also requires knowledge of the expected problem structure or an approximate correspondence to correct reasoning steps, which may limit its applicability to domains where such structure is difficult to obtain. Finally, our experiments primarily focus on mathematical reasoning tasks. Evaluating \strategy and RRI measures across broader reasoning domains and model families remains an important direction for future work.

\paragraph{Broader Impacts.}
Quantifying how individual steps contribute to a model's reasoning has broader implications for understanding and improving large language models. By characterizing repetitive reasoning at both the individual-step and trajectory levels, Step-RRI and Trajectory-RRI can support more interpretable auditing of reasoning behavior and provide signals for selecting more efficient reasoning trajectories during post-training. In particular, RRI measures have the potential to serve as theoretically grounded metrics for data selection when the goal is to promote efficient reasoning. While commonly used proxies such as reasoning length provide a coarse measure for curating efficient reasoning data, RRI directly characterizes repetitiveness based on the information contributed by reasoning steps, providing a principled basis for selecting less repetitive trajectories. More broadly, \strategy provides an information-theoretic perspective for studying when reasoning steps contribute new, shared, or complementary answer-relevant information. Such tools may help researchers diagnose inefficient or unreliable reasoning and develop models that produce more concise and interpretable reasoning trajectories.

\section{Appendix to Theoretical Results}
\label{app:theory}
First, we include two results on PID from \citep{banerjee2018unique}.

\begin{lem}[Monotonicity under local operations on $B$~\citep{banerjee2018unique}] Let $B'=f(B)$ where $f(\cdot)$ is a deterministic function. Then,
$$\uni{A}{B|C} \geq \uni{A}{B'|C}   .$$
\label{lemma:monotonicity_bob}
\end{lem}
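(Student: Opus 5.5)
The statement to prove is Lemma~\ref{lemma:monotonicity_bob}, cited from \citet{banerjee2018unique}. I will derive it directly from the optimization form of unique information in Definition~\ref{def:brojaUni}.

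Let $B' = f(B)$, and write
\begin{align*}
\Delta_P &= \{Q_{ABC} : Q_{AB}=P_{AB},\ Q_{AC}=P_{AC}\},\\
\Delta'_P &= \{Q'_{AB'C} : Q'_{AB'}=P_{AB'},\ Q'_{AC}=P_{AC}\}.
\end{align*}
The goal is to show
$$\min_{Q'\in\Delta'_P}\mutd{Q'}{A}{B'|C} \;\leq\; \min_{Q\in\Delta_P}\mutd{Q}{A}{B|C}.$$
It suffices to map every feasible $Q$ in the second problem to a feasible $Q'$ in the first problem with an objective value that is no larger.

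\textbf{Step 1: construct the pushforward.} Take any $Q_{ABC}\in\Delta_P$. Extend it to $(A,B,B',C)$ by setting $B'=f(B)$ deterministically. Let $Q'_{AB'C}$ be the resulting marginal.

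\textbf{Step 2: check feasibility.} The $(A,B')$ marginal of $Q'$ is the pushforward of $Q_{AB}$ under $(a,b)\mapsto(a,f(b))$. Since $Q_{AB}=P_{AB}$, this equals the pushforward of $P_{AB}$, which is $P_{AB'}$. The $(A,C)$ marginal is unchanged, so it still equals $P_{AC}$. Hence $Q'\in\Delta'_P$.

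\textbf{Step 3: compare objectives.} Under the extended $Q$, conditionally on $C$ we have the Markov chain $A - B - B'$, because $B'$ is a function of $B$. The conditional data processing inequality then gives
$$\mutd{Q'}{A}{B'|C} \le \mutd{Q}{A}{B|C}.$$
This follows from the chain rule:
$$\mutd{Q}{A}{B,B'|C} = \mutd{Q}{A}{B|C} = \mutd{Q}{A}{B'|C} + \mutd{Q}{A}{B|B',C},$$
and the last term is nonnegative.

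\textbf{Step 4: conclude.} Apply Steps 1–3 to a minimizer $Q^\ast$ of the right-hand problem; a minimizer exists because $\Delta_P$ is compact and conditional mutual information is continuous. This gives
$$\uni{A}{B'|C} \le \mutd{Q'^\ast}{A}{B'|C} \le \mutd{Q^\ast}{A}{B|C} = \uni{A}{B|C}.$$

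\textbf{Main obstacle.} The only subtle point is Step 2. We must check that the constructed $Q'$ matches the required marginals exactly, and in particular that the $(A,B')$ marginal depends only on $Q_{AB}$, which is fixed. The pushforward argument settles this. No reverse construction from $\Delta'_P$ back to $\Delta_P$ is needed, because only one direction of inequality is claimed.
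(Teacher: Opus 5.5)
Your proof is correct, and it is essentially the argument the paper writes out, with a noisy channel in place of the deterministic map $f$, in its proof of Theorem~\ref{thm:uniqueness_noise}: push a minimizer $Q^\ast\in\Delta_P$ forward through the map on $B$, check that the $(A,B')$ and $(A,C)$ marginals place the image in the new feasible set, and apply the conditional data processing inequality. For this lemma the paper only cites \citet[Lemma 31]{banerjee2018unique}; your derivation does not even need a minimizer to exist, since the bound holds for every feasible $Q$ and you can take the infimum.
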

This result is derived in \citep[Lemma 31]{banerjee2018unique}. 

\begin{lem}[Monotonicity under adversarial side information~\citep{banerjee2018unique}]
For all $(A,B,C,C')$, we have:
$$\uni{A}{B|C \cup C'} \leq \uni{A}{B|C}.$$
\label{lemma:monotonicity_eve}
\end{lem}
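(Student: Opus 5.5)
The statement to prove is Lemma~\ref{lemma:monotonicity_eve}: for all $(A,B,C,C')$, $\uni{A}{B|C\cup C'}\leq\uni{A}{B|C}$. I will work directly with Definition~\ref{def:brojaUni}. Let $P$ be the true joint law of $(A,B,C,C')$. The left-hand side is a minimum of $\mutd{Q}{A}{B|C,C'}$ over $Q_{ABCC'}$ with $Q_{AB}=P_{AB}$ and $Q_{ACC'}=P_{ACC'}$. The right-hand side is a minimum of $\mutd{R}{A}{B|C}$ over $R_{ABC}$ with $R_{AB}=P_{AB}$ and $R_{AC}=P_{AC}$.

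The strategy is a lifting argument. I take any feasible $R_{ABC}$ for the smaller problem, for instance an optimizer, which exists because the feasible set is compact and conditional mutual information is continuous in the finite-alphabet setting used here. From it I build a feasible $Q$ for the larger problem whose objective is no larger than $\mutd{R}{A}{B|C}$. The natural construction glues $C'$ onto $R$ through the true conditional:
\[
Q(a,b,c,c') := R(a,b,c)\,P(c'\mid a,c).
\]

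First I check feasibility. Summing out $(b)$ gives $Q_{ACC'}=R_{AC}\,P_{C'|AC}=P_{AC}\,P_{C'|AC}=P_{ACC'}$. Summing out $(c,c')$ gives $Q_{AB}=R_{AB}=P_{AB}$. So $Q$ lies in the feasible set for $\uni{A}{B|C\cup C'}$.

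Next I bound the objective. By construction, $C'$ is conditionally independent of $B$ given $(A,C)$ under $Q$, that is, $\mutd{Q}{B}{C'|A,C}=0$. Applying the chain rule to $\mutd{Q}{B}{A,C'|C}$ in two orders gives
\[
\mutd{Q}{A}{B|C}+\mutd{Q}{B}{C'|A,C}=\mutd{Q}{B}{C'|C}+\mutd{Q}{A}{B|C,C'}.
\]
The second term on the left vanishes and $\mutd{Q}{B}{C'|C}\ge 0$, so $\mutd{Q}{A}{B|C,C'}\le\mutd{Q}{A}{B|C}=\mutd{R}{A}{B|C}$, where the last equality holds because $Q_{ABC}=R$.

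Finally, taking the minimum yields $\uni{A}{B|C\cup C'}\le \mutd{Q}{A}{B|C,C'}\le \uni{A}{B|C}$.

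The main obstacle is choosing the gluing correctly. If $C'$ were attached via $P(c'\mid c)$ alone, the $(A,C,C')$ marginal would not be preserved. Attaching it via $P(c'\mid a,c)$ is exactly what makes both marginal constraints hold and also makes the conditional independence available for the chain-rule step. Everything else is routine.
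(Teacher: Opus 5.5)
Your proof is correct. The glued law $Q(a,b,c,c')=R(a,b,c)\,P(c'\mid a,c)$ has three properties: it is feasible for $\uni{A}{B|C\cup C'}$, it satisfies $Q_{ABC}=R$, and it gives $\mutd{Q}{B}{C'|A,C}=0$. So the two chain-rule expansions of $\mutd{Q}{B}{A,C'|C}$ yield $\mutd{Q}{A}{B|C,C'}=\mutd{R}{A}{B|C}-\mutd{Q}{B}{C'|C}\le\mutd{R}{A}{B|C}$. The only loose ends are cosmetic. You can define $P(c'\mid a,c)$ arbitrarily where $P_{AC}(a,c)=0$, since $R$ vanishes there. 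You can also take an infimum over feasible $R$ instead of invoking existence of a minimizer.

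The paper gives no argument of its own; it simply cites Lemma~32 of \citet{banerjee2018unique}. The citation buys brevity and places the statement next to its companion property, Lemma~\ref{lemma:monotonicity_bob}, from the same source. Your derivation makes the appendix self-contained and uses nothing beyond Definition~\ref{def:brojaUni} and the chain rule. It also makes explicit the one real design choice: preserving the $(A,C,C')$ marginal forces gluing through $P(c'\mid a,c)$ rather than $P(c'\mid c)$.

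Your argument is also the natural counterpart of the paper's own proof of Theorem~\ref{thm:uniqueness_noise}, which likewise pushes an optimizer through a channel to obtain a feasible point of the other optimization problem. There the channel acts on the source, and the conditional data-processing inequality closes the argument. In your proof the channel manufactures the extra side information $C'$ from $(A,C)$, and the Markov property $B-(A,C)-C'$ together with the chain rule does that job.
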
 
This result is derived in  \citep[Lemma 32]{banerjee2018unique}. 

\subsection{Proof to Theorem~\ref{thm:uniqueness_noise}}
\label{app:theorem1}



\begin{proof}[Proof of Theorem~\ref{thm:uniqueness_noise}] 
The proof follows from monotonicity under local operations, Lemma~\ref{lemma:monotonicity_bob}, since $S_i' = S_i+N$ is a local operation on $S_i$. 

We provide the detailed proof here for completeness. Since $N \perp (A,S_i,S_{i-1})$, the random variables $(A,S_{i-1})$, $S_i$, and $S_i'$ form a Markov chain, $(A,S_{i-1})- S_i - S_i'$. Let $P'$ be the marginal distribution of $(S_i',A,S_{i-1})$, and $P$ be the marginal distribution of $(S_i,A,S_{i-1})$. Let $Q^*
=\arg\min_{Q \in \Delta_P} I_Q(S_i;A\mid S_{i-1}),$
and let
\[
Q^{*'}(s_i',a,s_{i-1})
=
\sum_{s_i} P'(s_i'\mid s_i)Q^*(s_i,a,s_{i-1}).
\]
Then $Q^{*'} \in \Delta_{P'}$. By definition,
\begin{align*}
\uni{S_i}{A|S_{i-1}}
&= I_{Q^*}(S_i;A\mid S_{i-1}) \\
&\overset{(a)}\geq I_{Q^{*'}}(S_i';A\mid S_{i-1}) \\
&\geq \min_{Q' \in \Delta_{P'}}
    I_{Q'}(S_i';A\mid S_{i-1}) \\
&= \uni{S_i'}{A|S_{i-1}},
\end{align*}
Here, (a) uses the conditional form of the data processing inequality. Therefore, $\uni{S_i'}{A\mid S_{i-1}}\leq\uni{S_i}{A\mid S_{i-1}}$, i.e., the unique information contributed by the noisy step $S_i'$ about $A$ cannot exceed that of the original step $S_i$.

\end{proof}

\subsection{Proof to Theorem~\ref{thm:repititive2}}
\label{app:theorem2}

\begin{proof}[Proof of Theorem~\ref{thm:repititive2}]
To begin with, 
\begin{align*}
\uni{A}{S_{i}|S_{<i}} & \overset{(a)}{=} \mut{A}{S_{i}|S_{<i}} -\syn{A}{S_{i},S_{<i}} \\
& \overset{(b)}{\leq} \mut{A}{S_{i}|S_{<i}} \\
& \overset{(c)}= \mut{A}{S_A,S_N|S_{<i}} \\
& \overset{(d)}= \mut{A}{S_N|S_{<i}} + \mut{A}{S_A|S_{<i},S_N} \\
& \overset{(e)}= 0 + \mut{A}{S_A|S_{<i},S_N} \\
& \overset{(f)}=0.
\end{align*}

Here, step (a) holds from  \eqref{eq:pid_3} and step (b) holds from the non-negativity of PID terms, $\syn{A}{S_{i},S_{<i}}\geq 0 $. Step (c) holds because $S_i=(S_A,S_N)$. Step (d) follows directly from the chain rule of mutual information. Step (e) holds because $S_N \perp A, S_{<i}$ and step (f) holds because $S_A \subseteq S_{<i}$.

Similarly, $\syn{A}{S_{i},S_{<i}} = \mut{A}{S_{i}|S_{<i}} - \uni{A}{S_{i}|S_{<i}} \leq \mut{A}{S_{i}|S_{<i}} = 0$.

Next, since $\text{Step-RRI} := \rdn{A}{S_i,S_{<i}} - \eta \cdot \max\{\uni{A}{S_i|S_{<i}}, \syn{A}{S_i,S_{<i}}\}$, and $\max\{\uni{A}{S_i|S_{<i}}, \syn{A}{S_i,S_{<i}}\} = 0,$ we have,
\begin{align*}
\text{Step-RRI} = \rdn{A}{S_{i},S_{<i}} & \overset{(a)}{=} \mut{A}{S_{<i}} -  \uni{A}{S_{<i}|S_{i}} \\
& \overset{(b)}{\geq} \mut{A}{S_{<i}} -  \uni{A}{S_{<i}|S_A} \\
& \overset{(c)}= \rdn{A}{S_A,S_{<i}} \\
& \overset{(d)}= \mut{A}{S_A} - \uni{A}{S_A|S_{<i}} \\
& \overset{(e)}\geq  \mut{A}{S_A} - \mut{A}{S_A|S_{<i}} \\
& \overset{(f)}= \mut{A}{S_A} \\
& \overset{(g)}> 0.
\end{align*}
Here, step (a) holds from \eqref{eq:pid_2}, step (b) holds from Lemma~\ref{lemma:monotonicity_eve} (Monotonicity under adversarial side information) for unique information since $S_i=(S_A,S_N)$, step (c) and (d) hold from \eqref{eq:pid_2}, step (e) holds from \eqref{def:brojaUni}, $\uni{A}{S_{<i}|S_A} \leq \mut{A}{S_A|S_{<i}}$ and non-negativity of PID terms, step (f) holds because $S_A \subseteq S_{<i}$, and step (g) holds since the mutual information of the relevant part $S_A$ is positive. 

\end{proof}

\section{Appendix to Implementation of \strategy}
\subsection{Structure-Preserving Sample Generation}
\label{app:data-generation}
A key component of \strategy is the \emph{Structure-Preserving Sample Generation} procedure, which constructs multiple samples while preserving the underlying reasoning structure. For responses generated by QwQ-32B~\citep{team2025qwq} and DeepSeek-R1-Distill-Qwen-32B~\citep{guo2025deepseek} on \emph{GSM8K}~\citep{cobbe2021gsm8k} and \emph{AIME2025}~\citep{aime25}, before sample generation, we segment each reasoning trajectory and insert explicit \texttt{<step i>} tags to facilitate consistent step tracking across generated variants. We first segment the reasoning trajectory at double-line breaks when the resulting number of steps is fewer than 11. For trajectories with more than 10 steps, we instead use Qwen2.5-72B-Instruct~\citep{qwen2.5} as an automated reasoning-trace segmenter to produce more balanced and logically coherent steps. This prevents highly uneven segmentation, where an individual current step may be substantially shorter than the accumulated preceding reasoning. Because both are encoded into representations of the same dimensionality, such an imbalance may cause the much longer preceding reasoning to lose potentially relevant information during encoding. The automated segmenter therefore aims to produce more semantically coherent reasoning steps while preserving the underlying reasoning structure. We use the following prompt for this segmentation:

\begin{myquote}
\textbf{[System Prompt]}

You are an expert reasoning-trace segmenter. Your only task is to split a raw reasoning trace into sequential logical segments or steps by inserting markup tags.
\vspace{0.5em}
CRITICAL REGULATION FOR CONTENT PRESERVATION:
\begin{itemize}[itemsep=1pt]
\item Do NOT rewrite, paraphrase, summarize, or alter a single word of the original trace.
\item Do NOT compress or skip text. Every sentence, equation, repetition, error, and alternative reasoning method must remain word-for-word exactly as it appeared originally.
\item Keep all redundant loops, repetitive checks, and alternate thinking trajectories completely intact.
\end{itemize}

Formatting Rules:
\begin{itemize}[itemsep=1pt]
\item Output the full, unchanged text prefixed strictly by step markers like this:

<step1>: [First block of original text, entirely unchanged]

<step2>: [Second block of original text, entirely unchanged]

...

<stepN>: [Final block of original text, entirely unchanged]

\item Maximum number of steps: 10

\end{itemize}
Output must strictly contain ONLY the segmented text following this exact format without any conversational intro or markdown fences.

\textbf{[User Prompt]}
\vspace{0.5em}

QUESTION: \{question\}

REASONING TRACE: \{trace\}

Task:
Segment the reasoning trace into logical steps using the required format. Do NOT modify, delete, or rephrase any text.
Return ONLY the segmented trace.

\end{myquote}

For the redundancy class of the \emph{PRMBench}~\citep{song2025prmbench} dataset, the reasoning steps are already segmented. We prepend \texttt{<step i>} tags to each step to format the input text for sample generation. Next, we generate 50 numerical variants of each problem and its corresponding reasoning trajectory using GPT-4o-mini~\citep{openai2024gpt4omini} for GSM8K and PRMBench, and GPT-5-mini~\citep{openai2025gpt5mini} for AIME2025 to accommodate its more challenging problems. The prompts require each variant to preserve the original reasoning structure and the inserted \texttt{<step i>} tags. This allows corresponding reasoning steps to remain aligned across generated samples, facilitating subsequent step-level information estimation. We use the following prompt for structure-preserving sample generation for GSM8K and PRMBench:

\clearpage

\begin{myquote}
\textbf{[System Prompt]}
\vspace{0.5em}

You are an expert synthetic math dataset generator.

Rules:

\begin{enumerate}[label=\arabic*.,leftmargin=15pt,itemsep=2pt,topsep=5pt]
    \item Change ONLY numerical values in the problem. Do NOT modify any other text.
    \item Preserve the Chain-of-Thought structure EXACTLY:
    \begin{itemize}[leftmargin=15pt,itemsep=1pt]
        \item  Same number of steps where steps are given as: <step1>.... <step2> .... <stepN>....
        \item Same step order
        \item Same reasoning logic per step (even if repetitive, redundant, or unnecessary)
        \item Preserve step formatting exactly as written in the input, including:
        <step1>.... <stepN>....
    \end{itemize}
    \item Return valid JSON only.
\end{enumerate}

\vspace{0.8em}

\textbf{[User Prompt]}
\vspace{0.5em}

Given the reference data input:

\vspace{0.5em}

Problem:

\texttt{\{Original Question\}}

\vspace{0.5em}

Solution:

\texttt{\{Original Reasoning Trace\}}

\vspace{0.8em}

Generate exactly 1 numerical variation of the reference problem and its corresponding solution.
Return exactly this JSON structure:

\begin{lstlisting}[basicstyle=\ttfamily\small]
{
  "problem1": "<Insert problem variant 1>",
  "solution1": "<Insert full solution variant 1>",
  "answer1": "<Insert final numerical answer as number>"
}
\end{lstlisting}
Return VALID JSON ONLY.

\end{myquote}

Given the greater difficulty of AIME2025 problems, we adopt a two-stage generation process. First, we use the following prompt to generate valid numerical variants of the original problem.

\begin{myquote}
\textbf{[System Prompt]}
\vspace{0.5em}

Given a math problem, generate ONE numerical variant by changing only numerical values while preserving the original wording and problem structure.

Solve the generated problem carefully. If it is valid and solvable, end with boxed\{your answer\}. If it is invalid or unsolvable, end with boxed\{no solution\}.

Output:

<variant>
[generated new problem]
</variant>

<solution>
[solution of the new problem]
</solution>

You must end your response with boxed\{your answer\} every time!

\vspace{0.8em}

\textbf{[User Prompt]}
\vspace{0.5em}

Original problem:
 {question}
Remember to box your final answer via boxed\{your answer\}.

\end{myquote}

Then, using the generated solvable problems, we use the following prompt to generate structure-preserving samples.

\begin{myquote}
\textbf{[System Prompt]}
\vspace{0.5em}

You are given (1) a reference math problem, (2) its reference reasoning trajectory,
and (3) a numerical variant of that problem.

Solve the numerical variant by adapting the REFERENCE trajectory as conservatively as possible.

Requirements:

1. Change only numerical values, calculations, symbols, or words necessary for the changed numerical values.

2. Preserve the Chain-of-Thought structure EXACTLY:

   - Same number of steps: <step1>... <step2>... <stepN>...
   - Same step order
   
   - Same reasoning logic per step
   
   - Same step formatting
   
3. Solve the numerical variant correctly.

4. Return only the aligned trajectory followed by boxed\{your answer\}.

\vspace{0.8em}

\textbf{[User Prompt]}
\vspace{0.5em}

REFERENCE PROBLEM:
<question>

REFERENCE TRAJECTORY:
<thinking\_trajectory>

NUMERICAL VARIANT TO SOLVE:
<variant\_problem>
\end{myquote}


\subsection{Sensitivity Analysis}
\label{app:cluster_sensitivity}
A key component of \strategy is the discretization step, for which we employ k-means clustering on the embeddings. To evaluate how sensitive the PID estimates are to the choice of cluster count, we compute PID values for Problem 3 in Fig.~\ref{fig:qwq} using 10, 15, and 20 clusters (Table~\ref{tab:pid_cluster_sensitivity}). The results show that increasing the number of clusters generally yields higher PID values, though the magnitude of this increase differs across components. Crucially, the overall trends remain stable: the patterns of redundancy, uniqueness, and synergy are largely preserved across different cluster choices. For example, Table~\ref{tab:pid_cluster_sensitivity} shows that in the row $S_{<4}-S_{4}$, uniqueness in $S_4$ consistently dominates across all cluster counts $n_{\text{cluster}}$. In contrast, for $S_{<5}-S_{5}$ and $S_{<6}-S_{6}$, redundant information remains the dominant component regardless of the number of clusters $n_{\text{cluster}}$.

\definecolor{headerblue}{RGB}{218,239,249}
\definecolor{rowgray}{RGB}{248,248,248}
\begin{table}[!htbp]
\centering
\caption{Sensitivity of PID components for $S_{<i}$ and $S_i$ to the number of clusters.}
\label{tab:pid_cluster_sensitivity}

\small
\renewcommand{\arraystretch}{1.08}
\setlength{\tabcolsep}{5pt}

\begin{tabular}{ccccc}
\toprule

\textbf{\#Clusters} &
\textbf{Redundancy} &
\textbf{Unique ($S_{<i}$)} &
\textbf{Unique ($S_i$)} &
\textbf{Synergy}
\\


\midrule

\rowcolor{headerblue}
\multicolumn{5}{c}{$S_{<4}$ and $S_4$}\\

10 & 0.3621 & 0.0846 & \textbf{0.7101} & 0.1181\\
\rowcolor{rowgray}
15 & 0.4502 & 0.1319 & \textbf{0.6737} & 0.2166\\
20 & 0.5025 & 0.2378 & \textbf{0.6656} & 0.3353\\


\rowcolor{headerblue}
\multicolumn{5}{c}{$S_{<5}$ and $S_5$}\\

10 & \textbf{0.6284} & 0.1178 & 0.3822 & 0.1515\\
\rowcolor{rowgray}
15 & \textbf{0.7101} & 0.1486 & 0.4619 & 0.2233\\
20 & \textbf{0.6985} & 0.2426 & 0.5653 & 0.2418\\


\rowcolor{headerblue}
\multicolumn{5}{c}{$S_{<6}$ and $S_6$}\\

10 & \textbf{0.7851} & 0.0907 & 0.4895 & 0.0713\\
\rowcolor{rowgray}
15 & \textbf{0.7736} & 0.1507 & 0.5002 & 0.1382\\
20 & \textbf{0.8389} & 0.2268 & 0.5108 & 0.1692\\

\bottomrule
\end{tabular}
\end{table}

\section{Appendix to Experiments}
\label{app:experiments}

\paragraph{Datasets.}
We conduct our experiments on GSM8K~\citep{cobbe2021gsm8k},  PRMBench~\citep{song2025prmbench}, and AIME2025~\citep{aime25}.

\textbf{GSM8K}~\citep{cobbe2021gsm8k} is a benchmark consisting of grade-school 
mathematical word problems that require multi-step reasoning. It contains approximately $7.47$K training and $1.32k$ test mathematics word problems. Solving each problem typically requires multiple steps of numerical reasoning, including intermediate arithmetic operations, making GSM8K a widely used benchmark for evaluating the chain-of-thought reasoning capabilities of language models. We use GSM8K to analyze step-wise reasoning patterns, study the relationship between RRI and reasoning length, and conduct our RRI-guided fine-tuning experiments.

\textbf{PRMBench}~\citep{song2025prmbench} is a process-level reasoning benchmark designed to evaluate the ability to identify errors occurring at individual steps of a reasoning trajectory. The benchmark contains fine-grained annotations covering different categories of reasoning errors. In our experiments, we specifically focus on the \emph{redundancy} category to evaluate the ability of RRI to identify repetitive reasoning steps. This class has around $750$ problems. We treat steps annotated as redundant as the positive class and non-redundant steps as the negative class when computing the evaluation metric. 

\textbf{AIME2025}~\citep{aime25} is a benchmark containing 30 problems from the 2025 American Invitational Mathematics Examination I and II. Compared with GSM8K, its problems are more complex and require greater algebraic manipulation, geometric insight, and symbolic reasoning.

\paragraph{Spearman Correlation.} We use Spearman's rank correlation coefficient~\citep{zar2005spearman} to quantify the relationship between reasoning repetitiveness and reasoning length. For two variables $X$ and $Y$, Spearman's correlation is defined as the Pearson correlation between their ranks:
\begin{equation}
    \rho =
    \frac{
        \operatorname{Cov}\!\left(\operatorname{rank}(X),
        \operatorname{rank}(Y)\right)
    }{
        \sigma_{\operatorname{rank}(X)}
        \sigma_{\operatorname{rank}(Y)}
    },
\end{equation}
where $\operatorname{rank}(X)$ and $\operatorname{rank}(Y)$ denote the 
rank-transformed variables, $\operatorname{Cov}(\cdot,\cdot)$ denotes 
covariance, and $\sigma$ denotes standard deviation. The value ranges from $-1$ (perfect negative monotonicity) to $1$ (perfect positive monotonicity), with $0$ indicating no monotonic relationship.

\subsection{PID measures under synthetic noise} 
\label{app:noise}\begin{figure}[!htbp]
        \centering
        \includegraphics[width=0.9\linewidth]{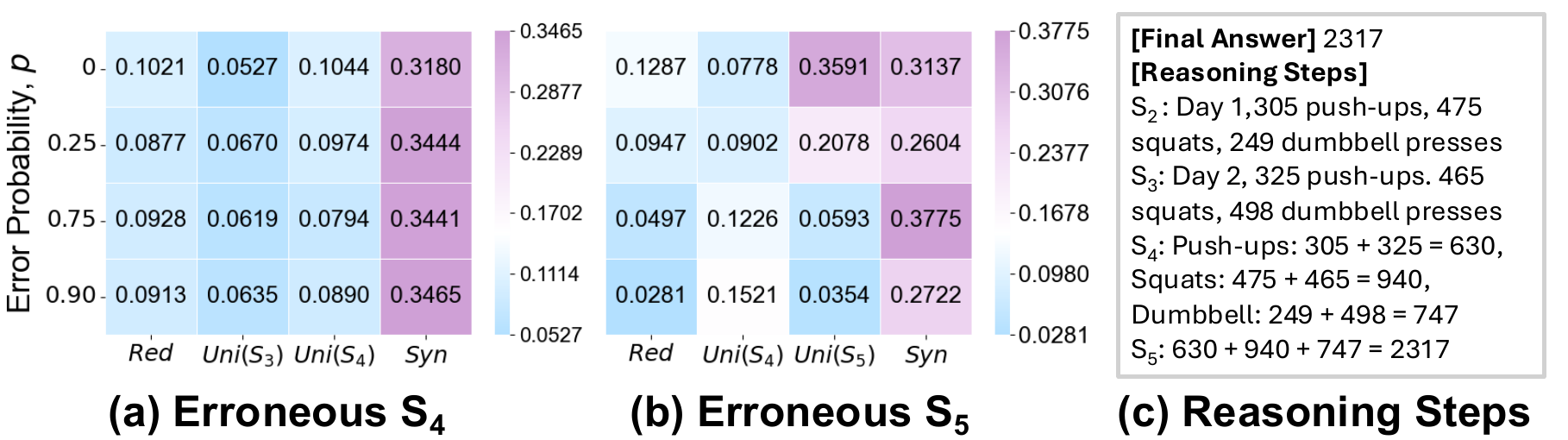}
    \caption{Interpretable trends of PID measures under errors for word problem: (a,b) Under errors, uniqueness Uni($S_i$) corresponding to a correct step increases with $p$, reflecting that correct steps contribute increasingly distinct information relative to their erroneous counterparts. Conversely, Uni($S_i$) associated with the erroneous step decreases as error probability $p$ increases. (c) Reference example for the reasoning steps in concise form.
    \label{fig:cot2} } 
    \end{figure}

We consider a word problem from the GSM8K dataset: \emph{``Darren decides to do body exercises for a whole week. He does $\{x\}$ pushups, $\{y\}$ squats, and $\{z\}$ dumbbell presses on the first day. On the second day, he does 20 more pushups than on the first day, ten fewer squats, and doubles the number of dumbbell presses. What's the total count of the activities he's done in the two days?''} We prompt DeepSeek-R1-Distill-Qwen-32B to generate a model response for the problem and retain the text preceding the \texttt{</think>} tag as the reasoning trajectory for our analysis. We manually segment each reasoning trajectory at double-newline (\texttt{\textbackslash n\textbackslash n}) boundaries. We then generate 1,000 problem instances by uniformly sampling $x,y \in [10,500]$ and $z \in [10,100]$. To study the effect of reasoning errors, we perturb the total push-up calculation in step $S_4$ with probability $p$ by adding uniformly sampled integer noise from $[0,1000]$. We then analyze the resulting PID measures using \strategy, as shown in Fig.~\ref{fig:cot2}. Consistent with Theorem~\ref{thm:uniqueness_noise}, under errors, the uniqueness $\mathrm{Uni}(S_i)$ corresponding to a correct step increases with $p$, indicating that correct steps provide increasingly distinct information relative to their erroneous counterparts. In contrast, the uniqueness associated with the erroneous step decreases as the error probability $p$ increases. Fig.~\ref{fig:cot2}(b) shows that introduced errors propagate toward the final step $S_5$. For that step, the drop in $\mathrm{Uni}(S_5)$ is significant since this step directly includes the final answer $A$. Thus, \strategy can help identify incorrectness in reasoning.

\subsection{Step-wise Repetitiveness} 

\emph{Setup.} To assess whether Step-RRI captures true repetitiveness in a reasoning trajectory, we consider three word problems from the GSM8K dataset. For each problem, we first obtain reasoning trajectories from DeepSeek-R1-Distill-Qwen-32B and QwQ-32B. We then generate 1,000 structure-preserving samples by parameterizing selected numerical values in the problem and updating the corresponding quantities in the reasoning trajectory while preserving its reasoning structure. For Problem~1, `Alisa biked \{x\} miles per hour for 4.5 hours. Stanley biked at \{y\} miles per hour for 2.5 hours. How many miles did Alisa and Stanley bike in total?'', we independently sample $x$ and $y$ uniformly from the integer range $[10,100]$ and update the corresponding calculations in the trajectory. For Problem~2, `'Anthony had {x} pencils. He gave 1/2 of his pencils to Brandon, and 3/5 of the remaining pencils to Charlie. He kept the remaining pencils. How many pencils did Anthony keep?'', and for Problem~3, ``James runs {x} miles a day for 5 days a week. If he runs 10 miles an hour, how many hours does he run a week?'', we similarly sample $x$ uniformly from the integer range $[10,1000]$ and update all dependent quantities in the reasoning trajectory. Then, we apply \strategy to get the RRI measures.

\textbf{Step-RRI for DeepSeek-R1-Distill-Qwen-32B.} Fig.~\ref{fig:r1_pid} shows the step-wise PID decomposition together with the resulting Step-RRI values (with $\eta = 1$) for reasoning trajectories generated by DeepSeek-R1-Distill-Qwen-32B. For Problems~1 and~2, Step-RRI remains negative or close to zero across the reasoning steps, indicating that the current steps primarily contribute new or complementary answer-relevant information relative to the preceding reasoning. In Problem~3, Step-RRI becomes positive at the final step, indicating that the current step increasingly recovers answer-relevant information already established by the preceding reasoning. These examples illustrate how Step-RRI summarizes the relative balance of redundant, unique, and synergistic information to localize true repetitive reasoning within a trajectory.
\label{app:contribution2}
\begin{figure}[!htbp]
    \centering
        \includegraphics[width=\linewidth]{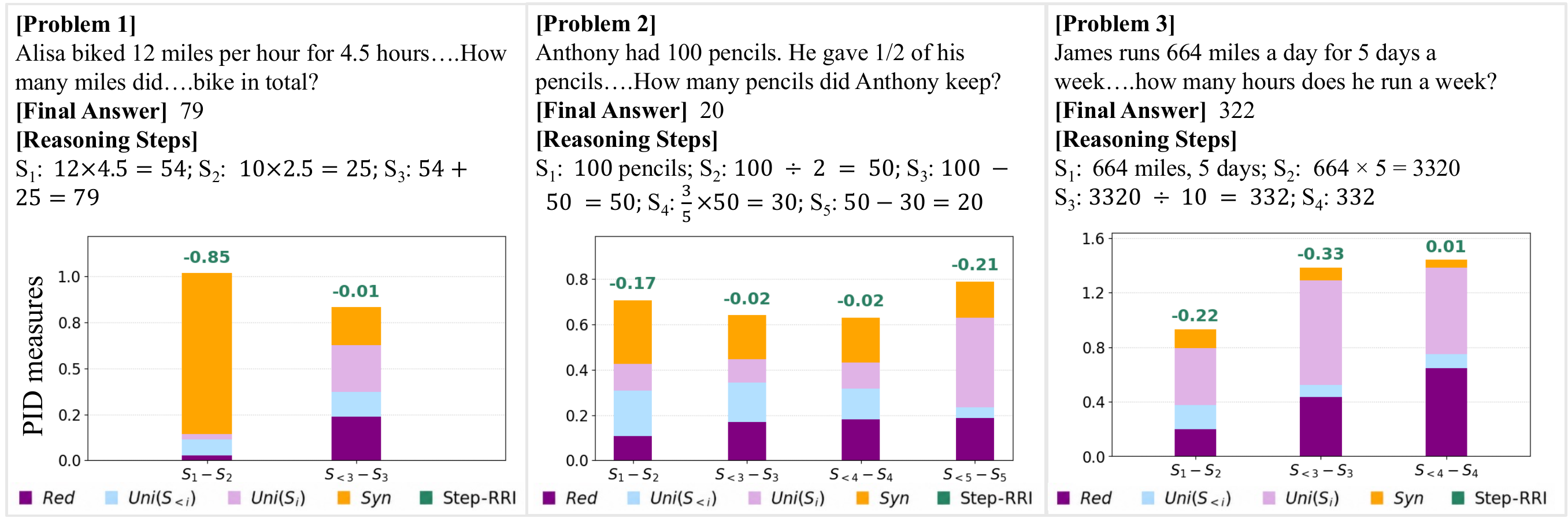}
\caption{\textbf{Step-RRI for DeepSeek-R1-Distill-Qwen-32B reasoning trajectories.}
Step-wise PID components and the corresponding Step-RRI values for three representative GSM8K reasoning trajectories. Negative Step-RRI indicates that the current step contributes primarily new or complementary answer-relevant information, whereas positive Step-RRI indicates increased repetitiveness with respect to the preceding reasoning.}
\label{fig:r1_pid}

\end{figure}
 
Next, we provide one representative full reasoning trajectory for each model, QwQ-32B and DeepSeek-R1-Distill-Qwen-32B. These examples show the original natural-language reasoning traces used in our experiments.
\begin{myquote}
\textbf{[Question]}
\vspace{1em}

James runs 664 miles a day for 5 days a week. If he runs 10 miles an hour how many hours does he run a week?
\vspace{1em}

\textbf{[Reasoning of QwQ-32B]}
\begin{enumerate}[label=\arabic*.,leftmargin=15pt, itemsep=2pt, topsep=5pt, parsep=0pt, partopsep=0pt]
\item Okay, let's see here. James runs 664 miles a day for 5 days a week. And we need to find out how many hours he runs each week, given that his speed is 10 miles per hour.
\item First, I think I should figure out the total distance he runs in a week. Since he does 664 miles each day and runs 5 days a week, I can multiply 664 by 5. Let me do that: 664 times 5 equals... 3320. So he runs 3320 miles in a week.
\item Now, the question is asking for the time he spends running, not the distance. His speed is 10 miles per hour. Hmm, so if he runs at 10 mph, then the time taken would be distance divided by speed. Right? Because time equals distance over speed.
\item So the total time per week would be the total miles divided by his speed. That would be 3320 miles divided by 10 miles per hour. Let me calculate that: 3320 divided by 10 is 332.0. So that gives 332.0 hours.
\item Wait, let me double-check. If he runs 664 miles a day at 10 mph, how long does each day's run take? For one day: 664 miles divided by 10 mph would be 66.4 hours per day. Then, multiplying that by 5 days: 66.4 times 5 equals 332.0 hours. Yep, same answer. So that seems right.
\item I don't think there's any trick here. Maybe I'm missing something? Like, does he take rest days or anything? The problem says he runs 5 days a week, so the other two days aren't counted. The calculation for total miles is correct. Then dividing by speed gives total hours. Both methods give 332.0, so I think that's solid. Alright, the answer should be 332.0 hours per week.
\end{enumerate}

\textbf{[Reasoning of DeepSeek-R1-Distill-Qwen-32B]}

\begin{enumerate}[label=\arabic*.,leftmargin=15pt, itemsep=2pt, topsep=5pt, parsep=0pt, partopsep=0pt]
\item First, I need to determine how many miles James runs each week. He runs 664 miles per day and does this for 5 days.
\item So, multiplying the daily distance by the number of days gives me the weekly total:
664 miles/day * 5 days = 3320 miles/week.
\item Next, since James maintains a running speed of 10 miles per hour, I can calculate the total time he spends running each week by dividing the total weekly distance by his speed: 3320 miles / 10 miles/hour = 332.0 hours.
\item Therefore, James runs a total of 332.0 hours each week.
\end{enumerate}
\end{myquote}
\textbf{Baselines.} We adapt two complementary measures as baselines: Cosine Similarity, which captures semantic overlap between reasoning steps, and InfoGain, which measures the reduction in predictive uncertainty contributed by a reasoning step. Intuitively, if the current reasoning step repeats or conveys the same information as the preceding reasoning, we expect it to exhibit high cosine similarity due to greater semantic overlap, while providing little additional information and therefore exhibiting lower information gain.

\emph{Cosine Similarity.} As a similarity-based baseline, we measure the semantic overlap between the current reasoning step $S_i$ and the preceding reasoning $S_{<i}$. We encode $S_i$ and $S_{<i}$ separately using  all-MiniLM-L6-v2~\citep{reimers2019sentence} and normalize the resulting
embeddings. Their cosine similarity is then given by
\begin{equation}
    \operatorname{CosSim}(S_i,S_{<i})
    =
    \mathbf{e}_{S_i}^{\top}\mathbf{e}_{S_{<i}},
\end{equation}
where $\mathbf{e}_{S_i}$ and $\mathbf{e}_{S_{<i}}$ denote the normalized
embeddings of the current and preceding reasoning, respectively. A high
cosine similarity indicates substantial semantic overlap with previously
established reasoning and therefore provides a natural proxy for
repetitiveness.

\emph{InfoGain.}
As an information-gain baseline, we adapt the information-gain principle
of~\citet{yong2025think}, where an informative reasoning step is expected
to reduce predictive uncertainty. At each step $S_i$, we use QwQ-32B to
obtain the top-$K$ next-token probabilities conditioned on the question $Q$
and the reasoning prefix $S_{1:i}$. Since only the top-$K$ probabilities are
retained, we renormalize them as
\begin{equation}
    \tilde{P}(v_k \mid Q,S_{1:i})
    =
    \frac{P(v_k \mid Q,S_{1:i})}
    {\sum_{j=1}^{K} P(v_j \mid Q,S_{1:i})},
\end{equation}
where $v_k$ denotes the $k$-th most probable next token. We then compute the
entropy of the normalized top-$K$ distribution as
\begin{equation}
    H_i =
    -\sum_{k=1}^{K}
    \tilde{P}(v_k \mid Q,S_{1:i})
    \log_2 \tilde{P}(v_k \mid Q,S_{1:i}),
\end{equation}
and define the information gain associated with $S_i$ as
\begin{equation}
    \Delta I_i = H_{i-1}-H_i.
\end{equation}
We use $K=10$ in our experiments. A large positive $\Delta I_i$ indicates that incorporating $S_i$ substantially reduces next-token uncertainty, whereas a small or negative $\Delta I_i$ suggests that the step provides limited additional predictive information. We therefore use low information gain as a proxy for repetitive reasoning, as a repetitive step is expected to contribute limited new predictive information.


\subsection{Trajectory-RRI versus Reasoning Length}
\label{app:contribution3}

\emph{Setup.} For our analysis on GSM8K~\citep{cobbe2021gsm8k}, we first identify the problems from the training set for which both QwQ-32B and DeepSeek-R1-Distill-Qwen-32B reach the correct final numerical answer. From this common set of correctly solved problems, we sample 1,000 problem indices and collect the corresponding responses from both models, ensuring that our analysis focuses on repetitiveness among successful reasoning trajectories. Each model response consists of a thinking trajectory followed by the final response, separated by the \texttt{</think>} tag. We extract the text preceding this tag as the reasoning trajectory and use only this portion throughout the analysis. For GPT-4.1, we use the same 1,000 problems and treat its generated responses as the corresponding reasoning trajectories. For AIME2025~\citep{aime25}, we select problems for which all three models produce the correct final answer.

We apply \strategy independently to the extracted reasoning trajectories of each model to estimate Step-RRI for individual reasoning steps and subsequently compute Trajectory-RRI for each problem. To analyze the relationship between Trajectory-RRI and the true reasoning length, we group the trajectories into quantile bins according to their reasoning length. Within each bin, we report the average Trajectory-RRI and the average reasoning length. For GSM8K, we use 50 bins, and for AIME2025, we use 7 bins. Importantly, the reasoning length is computed from the original reasoning trajectory generated by the LLMs, rather than from the structure-preserving numerical variants used by \strategy for estimating the PID components. Thus, the reported reasoning length directly reflects the token counts of the original model responses.

\subsection{Trajectory-RRI Guided Data Selection}
\label{app:NetRRI}
\begin{table}[!htbp]
\centering
\caption{Test accuracy on GSM8K across different average Trajectory-RRI levels.}
\label{tab:rri_accuracy}
\begin{tabular}{cc}
\toprule
\textbf{Avg. Trajectory-RRI} & \textbf{Accuracy (\%)} \\
\midrule
3.50 & 88.55 \\
3.34 & 89.31 \\
3.19 & 89.46 \\
3.02 & 89.31 \\
2.84 & 88.78 \\
2.42 & 89.16 \\
2.04 & 88.48 \\
1.88 & 87.41 \\
1.71 & 89.16 \\
\bottomrule
\end{tabular}
\end{table}
\emph{Trajectory-RRI-Guided Dataset Construction.}
We construct fine-tuning datasets systematically with varying levels of average Trajectory-RRI using 1,000 problems from GSM8K. For each problem, we obtain two candidate reasoning trajectories: one generated by QwQ-32B and the other by DeepSeek-R1-Distill-Qwen-32B. We compute the Trajectory-RRI of each candidate trajectory using \strategy, resulting in a pair of trajectories with potentially different levels of repetitive reasoning for each problem. To control the average Trajectory-RRI of the resulting training data while keeping the set of problems fixed, we construct each fine-tuning dataset by selecting exactly one of the two candidate trajectories for every problem. Specifically, for each problem, we identify the candidate trajectory with the higher Trajectory-RRI and select it with probability $p$. This procedure yields a training set containing 1,000 problem with different average Trajectory-RRI for each value of $p$. For example, when $p=0$, the lower-Trajectory-RRI trajectory is selected for every problem, whereas $p=1$ selects the higher-Trajectory-RRI trajectory for every problem. Intermediate values of $p$ produce mixtures of the two, allowing us to systematically vary the average Trajectory-RRI of the training data without changing the underlying set of GSM8K problems. These datasets are subsequently used to fine-tune the same base model under an identical training configuration, allowing us to examine the relationship between the repetitiveness of the training trajectories defined by our measure, Trajectory-RRI, and the actual reasoning behavior of the resulting fine-tuned models.

\textbf{LLM-as-a-judge efficiency score.} We use Qwen2.5-72B-Instruct~\citep{qwen2.5} as an LLM judge to evaluate the reasoning efficiency of the fine-tuned models on the test set. Specifically, the judge assesses each generated reasoning trajectory per question for conciseness, loops, and redundant reasoning. For each fine-tuned model, we report the overall efficiency as the average efficiency score across all examples in the test set. We use the following prompt for this evaluation:

\begin{myquote}
\textbf{[System Prompt]}

You are an expert AI judge evaluating the efficiency of reasoning traces.
Focus strictly on conciseness, loops, and redundancy. Output ONLY raw JSON.

\vspace{0.5em}
\textbf{[User Prompt]}

Evaluate the efficiency and repetitiveness of this reasoning trace.
Ignore whether the final answer is right or wrong; focus entirely on the
pacing and overhead of the logic.

\vspace{0.5em}
Question: \texttt{\{Original Question\}}
\vspace{0.5em}

Trace: \texttt{\{Reasoning Trace\}}
\vspace{0.5em}

\textbf{Score Guidelines (0--10 Efficiency Score):}

\begin{itemize}
    \item \textbf{10:} Maximally concise. Direct, optimal path with zero
    bloat or repetition.

    \item \textbf{7--9:} Highly efficient, but contains minor phrasing
    redundancy or slight over-explanation.

    \item \textbf{4--6:} Noticeably inefficient. Contains repetitive
    steps, circular logic, or heavy fluff.

    \item \textbf{1--3:} Extremely bloated, highly repetitive, or stuck
    in reasoning loops.

    \item \textbf{0:} Total failure of efficiency (e.g., endless
    repeating strings or infinite loops).
\end{itemize}

\textbf{Output Format (JSON only, no markdown):}

\begin{verbatim}
{
  "efficiency_score": 0-10,
  "reason": "One-sentence analysis of redundancy and pacing."
}
\end{verbatim}


\end{myquote}

\textbf{Fine-tuning setup.} We fine-tune Qwen2.5-7B-Instruct using Low-Rank Adaptation (LoRA)~\citep{hu2021lora}. We use a LoRA rank of $r=32$,
a scaling factor of $\alpha=32$, and a LoRA dropout of $0.05$.
LoRA adapters are applied to all attention
and MLP projection layers. We train each model for $10$ epochs using the AdamW optimizer with a learning rate of $1\times10^{-5}$, weight decay of
$1\times10^{-4}$, and momentum parameters $\beta_1=0.9$ and $\beta_2=0.95$. We use a cosine learning-rate scheduler with a warmup ratio of $0.05$. The per-device batch size is set to $1$, with gradient accumulation of $8$ steps, resulting
in an effective global batch size of $32$, where the number of GPUs used for training is $4$. We set the maximum sequence length to $8192$ tokens and use bfloat16 precision.

For distributed training, we use Fully Sharded Data Parallel
(FSDP) with full parameter sharding and automatic module wrapping.
We additionally enable gradient checkpointing to reduce memory
consumption. Model checkpoints are saved after each training epoch,
and no intermediate evaluation is performed during training.
All models are trained using the same optimization and fine-tuning
configuration to ensure a controlled comparison across datasets
with average Trajectory-RRI values.



\subsection{Runtime and Resource Usage}
\label{app:runtime}

All experiments are conducted using NVIDIA RTX 6000 GPUs with CUDA 12.8. For GSM8K, \strategy takes an average of approximately 181 seconds per sample for QwQ-32B and 82 seconds per sample for DeepSeek-R1-Distill-Qwen-32B. For PRMBench, the average runtime of \strategy is approximately 38 seconds per sample. For AIME2025, the average runtimes are approximately 609 and 320 seconds per sample for QwQ-32B and DeepSeek-R1-Distill-Qwen-32B, respectively.

\end{document}

%% file: math_commands.tex
\usepackage{amsmath,amsfonts,bm}

\def\eqref#1{equation~\ref{#1}}

\def\1{\bm{1}}

\DeclareMathAlphabet{\mathsfit}{\encodingdefault}{\sfdefault}{m}{sl}
\SetMathAlphabet{\mathsfit}{bold}{\encodingdefault}{\sfdefault}{bx}{n}



%% file: preamble.tex
\usepackage{xspace}
\newcommand{\strategy}{\textsc{Slider}\xspace}
\usepackage{titletoc}

%% file: main.bbl
\begin{thebibliography}{43}
\providecommand{\natexlab}[1]{#1}
\providecommand{\url}[1]{\texttt{#1}}
\expandafter\ifx\csname urlstyle\endcsname\relax
  \providecommand{\doi}[1]{doi: #1}\else
  \providecommand{\doi}{doi: \begingroup \urlstyle{rm}\Url}\fi

\bibitem[Ali et~al.(2025)Ali, Caso, Irwin, and Li{\`o}]{ali2025entropy}
Riccardo Ali, Francesco Caso, Christopher Irwin, and Pietro Li{\`o}.
\newblock Entropy-lens: The information signature of transformer computations.
\newblock \emph{arXiv preprint arXiv:2502.16570}, 2025.

\bibitem[Banerjee et~al.(2018)Banerjee, Olbrich, Jost, and Rauh]{banerjee2018unique}
Pradeep~Kr Banerjee, Eckehard Olbrich, J{\"u}rgen Jost, and Johannes Rauh.
\newblock Unique informations and deficiencies.
\newblock In \emph{2018 56th Annual Allerton Conference on Communication, Control, and Computing (Allerton)}, pp.\  32--38, 2018.

\bibitem[Bertschinger et~al.(2014)Bertschinger, Rauh, Olbrich, Jost, and Ay]{bertschinger2014quantifying}
Nils Bertschinger, Johannes Rauh, Eckehard Olbrich, J{\"u}rgen Jost, and Nihat Ay.
\newblock Quantifying unique information.
\newblock \emph{Entropy}, 16\penalty0 (4):\penalty0 2161--2183, 2014.

\bibitem[Cobbe et~al.(2021)Cobbe, Kosaraju, Bavarian, Chen, Jun, Kaiser, Plappert, Tworek, Hilton, Nakano, Hesse, and Schulman]{cobbe2021gsm8k}
Karl Cobbe, Vineet Kosaraju, Mohammad Bavarian, Mark Chen, Heewoo Jun, Lukasz Kaiser, Matthias Plappert, Jerry Tworek, Jacob Hilton, Reiichiro Nakano, Christopher Hesse, and John Schulman.
\newblock Training verifiers to solve math word problems.
\newblock \emph{arXiv preprint arXiv:2110.14168}, 2021.

\bibitem[Cover \& Thomas(2012)Cover and Thomas]{cover2012elements}
Thomas~M Cover and Joy~A Thomas.
\newblock \emph{{Elements of Information Theory}}.
\newblock John Wiley \& Sons, 2012.

\bibitem[Dewan et~al.(2024)Dewan, Zawar, Saxena, Chang, Luo, and Bisk]{diffusionPID}
Shaurya Dewan, Rushikesh Zawar, Prakanshul Saxena, Yingshan Chang, Andrew Luo, and Yonatan Bisk.
\newblock Diffusion pid: Interpreting diffusion via partial information decomposition.
\newblock \emph{Advances in Neural Information Processing Systems}, 37:\penalty0 2045--2079, 2024.

\bibitem[Dissanayake et~al.(2024)Dissanayake, Hamman, Halder, Sucholutsky, Zhang, and Dutta]{dissanayake2024quantifying}
Pasan Dissanayake, Faisal Hamman, Barproda Halder, Ilia Sucholutsky, Qiuyi Zhang, and Sanghamitra Dutta.
\newblock Quantifying knowledge distillation using partial information decomposition.
\newblock In \emph{International Conference on Artificial Intelligence and Statistics}, 2024.

\bibitem[Dutta \& Hamman(2023)Dutta and Hamman]{entropy_survey}
Sanghamitra Dutta and Faisal Hamman.
\newblock A review of partial information decomposition in algorithmic fairness and explainability.
\newblock \emph{Entropy}, 25\penalty0 (5), 2023.
\newblock ISSN 1099-4300.
\newblock \doi{10.3390/e25050795}.
\newblock URL \url{https://www.mdpi.com/1099-4300/25/5/795}.

\bibitem[Dutta et~al.(2020)Dutta, Venkatesh, Mardziel, Datta, and Grover]{dutta2020information}
Sanghamitra Dutta, Praveen Venkatesh, Piotr Mardziel, Anupam Datta, and Pulkit Grover.
\newblock An information-theoretic quantification of discrimination with exempt features.
\newblock In \emph{Proceedings of the AAAI Conference on Artificial Intelligence}, volume~34, pp.\  3825--3833, 2020.

\bibitem[Dutta et~al.(2021)Dutta, Venkatesh, Mardziel, Datta, and Grover]{dutta2021fairness}
Sanghamitra Dutta, Praveen Venkatesh, Piotr Mardziel, Anupam Datta, and Pulkit Grover.
\newblock Fairness under feature exemptions: Counterfactual and observational measures.
\newblock \emph{IEEE Transactions on Information Theory}, 67\penalty0 (10):\penalty0 6675--6710, 2021.
\newblock \doi{10.1109/TIT.2021.3103206}.

\bibitem[Ehrlich et~al.(2022)Ehrlich, Schneider, Wibral, Priesemann, and Makkeh]{ehrlich2022partial}
David~A Ehrlich, Andreas~C Schneider, Michael Wibral, Viola Priesemann, and Abdullah Makkeh.
\newblock Partial information decomposition reveals the structure of neural representations.
\newblock \emph{arXiv preprint arXiv:2209.10438}, 2022.

\bibitem[Farquhar et~al.(2024)Farquhar, Kossen, Kuhn, and Gal]{farquhar2024detecting}
Sebastian Farquhar, Jannik Kossen, Lorenz Kuhn, and Yarin Gal.
\newblock Detecting hallucinations in large language models using semantic entropy.
\newblock \emph{Nature}, 630\penalty0 (8017):\penalty0 625--630, 2024.

\bibitem[Gan et~al.(2025)Gan, Liao, and Liu]{gan2025rethinking}
Zeyu Gan, Yun Liao, and Yong Liu.
\newblock Rethinking external slow-thinking: From snowball errors to probability of correct reasoning.
\newblock \emph{arXiv preprint arXiv:2501.15602}, 2025.

\bibitem[Goswami et~al.(2023)Goswami, Merkley, and Grover]{goswami2023computing}
Chaitanya Goswami, Amanda Merkley, and Pulkit Grover.
\newblock Computing unique information for poisson and multinomial systems.
\newblock \emph{arXiv preprint arXiv:2305.07013}, 2023.

\bibitem[Guo et~al.(2025)Guo, Yang, Zhang, Song, Wang, Zhu, Xu, Zhang, Ma, Bi, et~al.]{guo2025deepseek}
Daya Guo, Dejian Yang, Haowei Zhang, Junxiao Song, Peiyi Wang, Qihao Zhu, Runxin Xu, Ruoyu Zhang, Shirong Ma, Xiao Bi, et~al.
\newblock Deepseek-r1: Incentivizing reasoning capability in llms via reinforcement learning.
\newblock \emph{arXiv preprint arXiv:2501.12948}, 2025.

\bibitem[Halder et~al.(2025{\natexlab{a}})Halder, Dissanayake, and Dutta]{halder2025learning}
Barproda Halder, Pasan Dissanayake, and Sanghamitra Dutta.
\newblock Learning invariant graph representations through redundant information.
\newblock \emph{arXiv preprint arXiv:2512.06154}, 2025{\natexlab{a}}.

\bibitem[Halder et~al.(2025{\natexlab{b}})Halder, Hamman, Dissanayake, Zhang, Sucholutsky, and Dutta]{haldertowards}
Barproda Halder, Faisal Hamman, Pasan Dissanayake, Qiuyi Zhang, Ilia Sucholutsky, and Sanghamitra Dutta.
\newblock Towards formalizing spuriousness of biased datasets using partial information decomposition.
\newblock \emph{Transactions on Machine Learning Research}, 2025{\natexlab{b}}.

\bibitem[Hamman \& Dutta(2024{\natexlab{a}})Hamman and Dutta]{hamman2024demystifying}
Faisal Hamman and Sanghamitra Dutta.
\newblock Demystifying local and global fairness trade-offs in federated learning using partial information decomposition.
\newblock \emph{International Conference on Learning Representations}, 2024{\natexlab{a}}.

\bibitem[Hamman \& Dutta(2024{\natexlab{b}})Hamman and Dutta]{hammanUnified}
Faisal Hamman and Sanghamitra Dutta.
\newblock A unified view of group fairness tradeoffs using partial information decomposition.
\newblock In \emph{2024 IEEE International Symposium on Information Theory (ISIT)}, pp.\  214--219, 2024{\natexlab{b}}.
\newblock \doi{10.1109/ISIT57864.2024.10619698}.

\bibitem[Hu et~al.(2021)Hu, Shen, Wallis, Allen-Zhu, Li, Wang, Wang, and Chen]{hu2021lora}
Edward~J Hu, Yelong Shen, Phillip Wallis, Zeyuan Allen-Zhu, Yuanzhi Li, Shean Wang, Lu~Wang, and Weizhu Chen.
\newblock Lora: Low-rank adaptation of large language models.
\newblock \emph{arXiv preprint arXiv:2106.09685}, 2021.

\bibitem[Kossen et~al.(2024)Kossen, Han, Razzak, Schut, Malik, and Gal]{kossen2024semantic}
Jannik Kossen, Jiatong Han, Muhammed Razzak, Lisa Schut, Shreshth Malik, and Yarin Gal.
\newblock Semantic entropy probes: Robust and cheap hallucination detection in llms.
\newblock \emph{arXiv preprint arXiv:2406.15927}, 2024.

\bibitem[Liang et~al.(2023)Liang, Ling, Cheng, Obolenskiy, Liu, Pandey, Wilf, Morency, and Salakhutdinov]{liang2023multimodal}
Paul~Pu Liang, Chun~Kai Ling, Yun Cheng, Alex Obolenskiy, Yudong Liu, Rohan Pandey, Alex Wilf, Louis-Philippe Morency, and Ruslan Salakhutdinov.
\newblock Multimodal learning without labeled multimodal data: Guarantees and applications.
\newblock \emph{arXiv preprint arXiv:2306.04539}, 2023.

\bibitem[Lyu et~al.(2024)Lyu, Clark, and Raviv]{lyu2024explicit}
Aobo Lyu, Andrew Clark, and Netanel Raviv.
\newblock Explicit formula for partial information decomposition.
\newblock In \emph{2024 IEEE International Symposium on Information Theory (ISIT)}, pp.\  2329--2334, 2024.
\newblock \doi{10.1109/ISIT57864.2024.10619369}.

\bibitem[Mohamadi et~al.(2023)Mohamadi, Doretto, and Adjeroh]{mohamadi2023more}
Salman Mohamadi, Gianfranco Doretto, and Donald~A Adjeroh.
\newblock More synergy, less redundancy: Exploiting joint mutual information for self-supervised learning.
\newblock \emph{arXiv preprint arXiv:2307.00651}, 2023.

\bibitem[OpenAI.(2024)]{OpenAI}
OpenAI.
\newblock Learning to reason with {LLMs}, 2024.
\newblock URL \url{https://openai.com/index/learning-to-reason-with-llms/}.

\bibitem[OpenAI(2024)]{openai2024gpt4omini}
OpenAI.
\newblock Gpt-4o mini: Advancing cost-efficient intelligence.
\newblock \url{https://openai.com/index/gpt-4o-mini-advancing-cost-efficient-intelligence/}, July 2024.
\newblock Accessed: 2026-08-06.

\bibitem[{OpenAI}(2025{\natexlab{a}})]{openai2025gpt41}
{OpenAI}.
\newblock Introducing gpt-4.1 in the api.
\newblock \url{https://openai.com/index/gpt-4-1/}, April 2025{\natexlab{a}}.
\newblock Accessed: 2026-09-25.

\bibitem[{OpenAI}(2025{\natexlab{b}})]{openai2025gpt5mini}
{OpenAI}.
\newblock {GPT-5 mini}.
\newblock \url{https://developers.openai.com/api/docs/models/gpt-5-mini}, 2025{\natexlab{b}}.
\newblock OpenAI API model documentation.

\bibitem[Pakman et~al.(2021)Pakman, Nejatbakhsh, Gilboa, Makkeh, Mazzucato, Wibral, and Schneidman]{pakman2021estimating}
Ari Pakman, Amin Nejatbakhsh, Dar Gilboa, Abdullah Makkeh, Luca Mazzucato, Michael Wibral, and Elad Schneidman.
\newblock Estimating the unique information of continuous variables.
\newblock \emph{Advances in neural information processing systems}, 34:\penalty0 20295--20307, 2021.

\bibitem[Reimers \& Gurevych(2019)Reimers and Gurevych]{reimers2019sentence}
Nils Reimers and Iryna Gurevych.
\newblock Sentence-bert: Sentence embeddings using siamese bert-networks.
\newblock In \emph{Proceedings of the 2019 conference on empirical methods in natural language processing and the 9th international joint conference on natural language processing (EMNLP-IJCNLP)}, pp.\  3982--3992, 2019.

\bibitem[Song et~al.(2025)Song, Su, Qu, Zhou, and Cheng]{song2025prmbench}
Mingyang Song, Zhaochen Su, Xiaoye Qu, Jiawei Zhou, and Yu~Cheng.
\newblock Prmbench: A fine-grained and challenging benchmark for process-level reward models.
\newblock In \emph{Proceedings of the 63rd Annual Meeting of the Association for Computational Linguistics (Volume 1: Long Papers)}, pp.\  25299--25346, 2025.

\bibitem[Tax et~al.(2017)Tax, Mediano, and Shanahan]{tax2017partial}
Tycho Tax, Pedro Mediano, and Murray Shanahan.
\newblock The partial information decomposition of generative neural network models.
\newblock \emph{Entropy}, 19\penalty0 (9):\penalty0 474, 2017.

\bibitem[Team(2024)]{qwen2.5}
Qwen Team.
\newblock Qwen2.5: A party of foundation models, September 2024.
\newblock URL \url{https://qwenlm.github.io/blog/qwen2.5/}.

\bibitem[Team(2025)]{team2025qwq}
Qwen Team.
\newblock Qwq-32b: Embracing the power of reinforcement learning, 2025.

\bibitem[Ton et~al.(2024)Ton, Taufiq, and Liu]{ton2024understanding}
Jean-Francois Ton, Muhammad~Faaiz Taufiq, and Yang Liu.
\newblock Understanding chain-of-thought in llms through information theory.
\newblock \emph{arXiv preprint arXiv:2411.11984}, 2024.

\bibitem[Venkatesh et~al.(2024)Venkatesh, Bennett, Gale, Ramirez, Heller, Durand, Olsen, and Mihalas]{venkatesh2024gaussian}
Praveen Venkatesh, Corbett Bennett, Sam Gale, Tamina Ramirez, Greggory Heller, Severine Durand, Shawn Olsen, and Stefan Mihalas.
\newblock Gaussian partial information decomposition: Bias correction and application to high-dimensional data.
\newblock \emph{Advances in Neural Information Processing Systems}, 36, 2024.

\bibitem[Wei et~al.(2022)Wei, Wang, Schuurmans, Bosma, Xia, Chi, Le, and Zhou]{wei2022chain}
Jason Wei, Xuezhi Wang, Dale Schuurmans, Maarten Bosma, Fei Xia, Ed~Chi, Quoc~V Le, and Denny Zhou.
\newblock Chain-of-thought prompting elicits reasoning in large language models.
\newblock In \emph{Advances in Neural Information Processing Systems}, volume~35, pp.\  24824--24837, 2022.

\bibitem[Williams \& Beer(2010)Williams and Beer]{williams2010nonnegativePID}
Paul~L Williams and Randall~D Beer.
\newblock Nonnegative decomposition of multivariate information.
\newblock \emph{arXiv preprint arXiv:1004.2515}, 2010.

\bibitem[Wollstadt et~al.(2023)Wollstadt, Schmitt, and Wibral]{wollstadt2023rigorous}
Patricia Wollstadt, Sebastian Schmitt, and Michael Wibral.
\newblock A rigorous information-theoretic definition of redundancy and relevancy in feature selection based on (partial) information decomposition.
\newblock \emph{J. Mach. Learn. Res.}, 24:\penalty0 131--1, 2023.

\bibitem[Xiu et~al.(2026)Xiu, Luo, and Nakayama]{xiu2026comprehensive}
Lixin Xiu, Xufang Luo, and Hideki Nakayama.
\newblock A comprehensive information-decomposition analysis of large vision-language models.
\newblock \emph{arXiv preprint arXiv:2603.29676}, 2026.

\bibitem[Yong et~al.(2025)Yong, Zhou, Zhang, Li, Zheng, and Wu]{yong2025think}
Xixian Yong, Xiao Zhou, Yingying Zhang, Jinlin Li, Yefeng Zheng, and Xian Wu.
\newblock Think or not? exploring thinking efficiency in large reasoning models via an information-theoretic lens.
\newblock \emph{arXiv preprint arXiv:2505.18237}, 2025.

\bibitem[Zar(2005)]{zar2005spearman}
Jerrold~H Zar.
\newblock Spearman rank correlation.
\newblock \emph{Encyclopedia of biostatistics}, 7, 2005.

\bibitem[Zhang \& Team(2025)Zhang and Team]{aime25}
Yifan Zhang and Math-AI Team.
\newblock American invitational mathematics examination (aime) 2025, 2025.
\newblock URL \url{https://huggingface.co/datasets/math-ai/aime25}.

\end{thebibliography}
